\newtheorem{theorem}{Theorem}
\newtheorem*{theorem*}{Theorem}
\newtheorem{lemma}{Lemma}
\newtheorem*{lemma*}{Lemma}
\newtheorem*{remark*}{Remark}
\newtheorem*{fact*}{Fact}
\newtheorem*{proof*}{Proof}
\newcommand*{\QEDA}{\hfill\ensuremath{\blacksquare}}%
\newtheorem{proposition}{Proposition}
\DeclareMathOperator{\E}{\mathbb{E}}
\DeclareMathOperator{\Var}{\text{Var}}
\newlength\myindent 
\begin{document}
	\title{An Optimal Treatment Assignment Strategy to Evaluate Demand Response Effect}

	\author{Pan~Li,~\IEEEmembership{Student~Member,~IEEE,}~and~Baosen Zhang,~\IEEEmembership{Member,~IEEE}
		\thanks{This work was supported by NSF grant CNS-1544160 and the University of Washington Clean Energy Institute.}
		\thanks{The authors are with the Department of Electrical Engineering, University of Washington, Seattle, WA 98195, USA (e-mail: \{pli69,zhangbao\}@uw.edu).}
		\thanks{Partial results appeared in an earlier version of this paper presented in the Allerton Conference, 2016.}
	}

	\markboth{IEEE Transactions on Smart Grid}%
	{Li\MakeLowercase{\textit{et al.}}: TBD}

	\maketitle
	\thispagestyle{empty}
	\pagestyle{empty}
	\begin{abstract}
		Demand response is designed to motivate electricity customers to modify their loads at critical time periods. The accurate estimation of impact of demand response signals to customers' consumption is central to any successful program. In practice, learning these response is nontrivial because operators can only send a limited number of signals. In addition, customer behavior also depends on a large number of exogenous covariates. These two features lead to a high dimensional inference problem with limited number of observations. In this paper, we formulate this problem by using a multivariate linear model and adopt an experimental design approach to estimate the impact of demand response signals. We show that randomized assignment, which is widely used to estimate the average treatment effect, is not efficient in reducing the variance of the estimator when a large number of covariates is present. In contrast, we present a tractable algorithm that strategically assigns demand response signals to customers. This algorithm achieves the optimal reduction in estimation variance, independent of the number of covariates. The results are validated from simulations on synthetic data as well as simulated building data.

	\end{abstract}

	\section{INTRODUCTION}
	As more uncertain and intermittent renewable resources are integrated into the power system, operators are increasingly exploring flexibility in customers' consumptions to balance supply and demand. This operation is commonly known as \emph{demand response (DR)}. In a typical implementation of DR programs, customers receive a DR signal to elicit a change in their consumptions. This signal can be a modification of electricity prices or simply a message requesting a change in consumption~\cite{Siano2014}. An effective DR program improves the efficiency and sustainability of power systems and is a central pillar of the envisioned smartgrid~\cite{DR0,DR2,Dai,LiEtAl2011}.

	A natural question about demand response is quantifying the impact of a DR signal. That is, if a DR signal is sent to a subset of the users, what is the change in these users' consumptions because of that signal? An accurate estimate of this change is central to the operation of demand response programs: if not enough change in demand is elicited, other measures need to be taken; if too much change is elicited, the program is inefficient.

	Most of existing work in this area of demand response have approached the problem from a market optimization point of view. For example, authors in~\cite{LiEtAl2011,QianEtAl2013} considered how to optimize the social welfare; and authors in~\cite{SaadEtAl2012} have considered how to create an efficient market for demand response. In all these settings, customers' responses are captured by well-defined utility functions. These functions are assumed to be known to the operators, or at least to the users themselves.

	In practice, these market based approaches can be difficult to implement because customers often do not have a clear model of their own utilities. For example, consider a household with a smart energy management system (e.g., a NEST thermostat). This household will respond to a DR signal, but the response can be a complicated function of the current conditions in the household--e.g., temperature, appliances that are on, number of people at home and so on-- and the user may not be consciously aware of the households' utility function. Therefore the operator needs to \emph{learn customers' responses} from past history. Furthermore, because of the advancement of household sensors, this response need to be learned under a possibly \emph{high dimensional setting}.

By performing enough experiments with enough customers, that is, sending enough DR signals, the operator will eventually learn the users' response with accuracy. Repeated experimentation with large group of customers, however, is impractical for two reasons. The first is that operators only sends out DR signals if the demand need to be modified appreciably, and this event does not occur all that often in the power system. The second is that because of use fatigue~\cite{HolyheadEtAl2015}, most utilities have agreements with their customers that a single household will only receive a limited number of DR requests~\cite{BalijepalliEtAl2011}. Under these \emph{limited data} regimes, accurately learning the response of users is nontrivial.

	In this paper, we adopt an \emph{experimental design approach} to the problem estimating the response of users to DR signals.   We design user selection algorithms where by carefully choosing the users that receive DR signals, the maximum information about the system response can be learned.

	In this paper we consider the linear setting, where a user's consumption is a linear function of a set of variables and the DR signal. We refer to the former abstractly as a user's \emph{covariates}, where they could represent measurements such as temperature, appliance status, building type, behavior patterns and etc. Because of the explosive growth in sensing devices, we are particularly interested in the high dimensional case, that is, where users have a large number of covariates. We assume the impact of DR signals is additive to the original consumption behavior. Using the language of experimental design, we regard DR signal as a \emph{treatment} and a user receiving a DR signal as being \emph{assigned this treatment}.  Then learning the DR response of the users is equivalent to learning the \emph{average treatment effect}, which is the average response of the customers to the treatment~\cite{Holland1986}. The metric with respect to the estimation of the treatment effect is the \emph{variance} of the estimator as more experiments are performed.
	 %
	 %
	%
	Randomized trial is usually thought as the ``gold standard'' in these types of models mainly due to the fact that randomly assigning treatments to users removes the effect of confounding factors and provides a consistent estimate of the treatment effect. In the presence of many covariates, however, random assignment can be extremely inefficient. In fact, as we show in this paper, in the high dimensional setting, random assignment \emph{does not reduce the variance} of the estimate of the average treatment effect, even as the number of treatment grows without bound. Instead, following the outline in \cite{ABtest}, we design a selection scheme users are picked based on their covariates to be treated.

	Suppose there are $n$ total users in the system. Under the linear models considered in this paper, the best possible lower bound on the rate of variance reduction is $\Omega(1/n)$,\footnote{no estimator can reduce variance faster than $1/n$} given by considering the Fisher information~\cite{Fisher1935}. As discussed above, under high dimensional settings, a randomized algorithm can only achieve $\Omega(1)$, even when are large number of fraction of users are assigned DR signals. The main contributions of this paper are:
	\begin{enumerate}
		\item We show if the number of users selected is a constant fraction of the total number of users, there exists user assignments that achieve a variance reduction rate of $\Theta(1/n)$. This rate is independent of the dimension of the covariates, as long as it is less than $n$.
		\item We develop a tractable user assignment algorithm. This algorithm is obtained by converting the variance reduction problem in a densest-cut problem on a graph~\cite{KortsarzPeleg8,FeigLangber6,SDP,YeZhang19}.
	\end{enumerate}

Our approach differs from previous effort in learning demand response in one important regard. In previous studies, the focus was on training the best predictive model and subtracting out the predicted consumption from the measured consumption~\cite{Zhou16,Brodersen15}. In our approach, we do not ever learn a predictive model, in the sense that we do not learn the relationship between the covariates and the consumption. Rather, focus on learning a single parameter: the response to the DR signal.

	The results in \cite{ABtest} act as an impetus to this paper. The main difference is that in \cite{ABtest}, the users are assigned a treatment of $\pm 1$, therefore some information is always conveyed by this assignment. In our model, the users are assigned either $1$ (receives DR signal) or $0$ (no signal). Therefore, for the users assigned $0$, we do not obtain any information about the impact of DR signals. This makes the problem much more technically challenging, and consequently we only consider the offline assignment problem whereas \cite{ABtest} also considers the online assignment problem. There are extensive literature on average treatment effect estimation, and the interested reader can refer to \cite{causalreview1,causalreview2} and references within.

	The rest of the paper is organized as follows. Section \ref{problem} introduces the preliminaries and the problem throughout this paper. Section \ref{random} presents the variance of the estimator obtained by random assignment. Section \ref{optimal} further presents the variance of the estimator by optimal assignment, followed by a tractable algorithm presented in Section \ref{sec:SDP}. Section \ref{simulation} details the simulation results obtained by either random assignment and optimal assignment. Section \ref{conclusion} concludes the paper.

	\section{PRELIMINARIES AND PROBLEM FORMULATION}\label{problem}

	In this paper we assume that a user's consumption is given by a linear model. Let $x_i \in \{0,1\}$ denotes a binary DR signal, where $1$ represents that a signal is sent to user $i$ and $0$ presents that no signal is sent. A covariate vector $\bm z_i$ is also associated with a user, representing available side information. For example, side information may include local temperature, user's household size, and number of electrical vehicles and so on. We denote the dimension of the covariate vector by $p$, and assume the last component is $1$, which is the intercept. Let $y_i$ denote the consumption of user $i$, which is given as
	\begin{equation}\label{prob:problem}
	y_i  = \beta x_i + \bm{\gamma}^T \bm{z}_i +\epsilon_i,
	\end{equation}
	where $\epsilon_i$ is white noise with variance $\sigma^2 = 1$ (for convenience). The coefficient $\beta$ is the impact of the DR signal and estimating it efficiently is the goal of the paper. The coefficient $\bm \gamma$ represents the effect of the covariate vectors. The main technical challenge is to accurately estimate the coefficient of interest $\beta$, even when $\gamma$ is high dimensional. For analytical simplicity, we assume that the entries of $\bm z_i$ are drawn as i.i.d. Gaussian random variables~(possibly after centering and rescaling). In simulations (Section \ref{simulation}), we show that the results holds for other types of distributions as well.

	We assume there are $n$ total users. In this model, a single user that receives two demand response signals at two different times is equivalent to two users each receiving a demand response signal. Therefore, we suppress the time dimension and label all users by $i$. Note that in \eqref{prob:problem}, all users share a common response $\beta$ to DR signals.


	%
	%

	We denote the estimate of $\beta$ by $\hat{\beta}$. The value of $\hat{\beta}$ is a function of the DR assignments, that is, the value of the $x_i$'s.  Under the linear setting in \eqref{prob:problem}, the ordinary least square (ols) estimator $\hat{\beta}$ of $\beta$  is unbiased for all possible allocations of DR assignments, $\hat{\beta}$ is centered at the true value $\beta$. The natural measure of performance is then the variance: $\Var \hat{\beta}$. With some simple linear algebra, the variance of $\hat{\beta}$ is given by~\cite{LinReg2012}:
	\begin{equation}\label{eq:varbeta}
	\Var \hat{\beta} = \frac{\sigma^2}{\bm{x}^{\text{T}}P_{\bm{Z}^{\perp}}\bm{x}},
	\end{equation}
	where $P_{\bm{Z}^{\perp}} = I-\bm{Z}(\bm{Z}^{\text{T}}\bm{Z})^{-1}\bm{Z}^{\text{T}}$. The $i$'th row of the data matrix $\bm{Z}$ is given by $\bm{z}_i^T$.
	We adopt the notation that $\bm{Z}_{n,p}$ denotes a matrix $\bm{Z}$ that has $n$ rows and $p$ columns, while $\bm{Z}_{i:j}$ denotes the $i^{th}$ to $j^{th}$ column of a matrix $\bm{Z}$, where $i \leq j$.

	We are primarily interested in the setting where an operator can assign a limited number of $x_i$'s to be $1$. This setting reflects the limit in budget of an operator in sending DR signals. Specifically, let $k$ be the total number of DR signal that can be sent. The goal of the operator is to strategically assign $k$ $x_i$'s to be $1$ such that the variance of $\hat{\beta}$ is minimized. In particular, we are interested in the rate of reduction of $\hat{\beta}$ as $n$ increases and in settings where $k/n$ is a constant.

	%

	From \eqref{eq:varbeta}, minimizing the variance of $\hat{\beta}$ is equivalent to maximizing  the quantity $\bm{x}^{\text{T}}P_{\bm{Z}^{\perp}}\bm{x}$, and we focus on the latter quantity in the rest of the paper due to notational convenience. Two types of algorithms are of interest: i) the standard random assignment where each $x_i$ is chosen to be $1$ or $0$ with probability $k/n$, and ii) an optimal assignment procedure where $x_i$'s are chosen to maximize $\bm{x}^{\text{T}}P_{\bm{Z}^{\perp}}\bm{x}$. Both algorithms face the constrain that only $k$ out of $n$ $x_i$'s can be assigned to be $1$.  We characterize \emph{growth rate} of quantity $\bm{x}^{\text{T}}P_{\bm{Z}^{\perp}}\bm{x}$ in terms of $k,n$ and $p$, or equivalently, the decay rate of $\Var \hat{\beta}$.


	We show that when $p$ is relatively small compared to $n$, the two strategies yield similar rates of $\Theta(n)$. In a high dimensional setting where $p$ is comparable to $n$, e.g., $p = n-1$, however, the random assignment is essentially useless in estimating $\beta$, in the sense that $\bm{x}^{\text{T}}P_{\bm{Z}^{\perp}}\bm{x}$ remains a constant in expectation as $n$ grows. Our proposed strategy, on the other hand, improves the rate to $\Theta(n)$ in this case, as long as $k/n$ is a constant. In Section \ref{random}, we discuss the randomized strategy. The optimal assignment algorithm is then considered in Section \ref{optimal}.

	%


	\section{RANDOM ASSIGNMENT}\label{random}

	Random assignment has been extensively studied in literature, mainly because it balances the covariates in two groups and eliminate the influence of confounders~\cite{causalreview1}. For our model in \eqref{prob:problem}, random assignment means that a subset of $k$ $x_i$'s are chosen at random and assigned a value $1$. Theorem \ref{theorem1} quantifies the rate of the increase of $\bm{x}^{\text{T}}P_{\bm{Z}^{\perp}}\bm{x}$.

	\begin{theorem}\label{theorem1}
		Random assignment achieves a rate of $\Theta$($(n-p)\frac{k(n-k)}{n^2}$). If $\frac{k}{n} = \rho$ is a constant, then this rate is $\Theta$($n-p$).
		\QEDA
	\end{theorem}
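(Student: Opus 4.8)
The plan is to compute the expectation $\E[\bm{x}^{\text{T}}P_{\bm{Z}^{\perp}}\bm{x}]$ in closed form by expanding the quadratic form and applying linearity of expectation. Since the random assignment $\bm{x}$ is drawn independently of the covariate matrix $\bm{Z}$, I would first condition on $\bm{Z}$; because the entries of $\bm{Z}$ are i.i.d.\ Gaussian and $p < n$, $\bm{Z}$ has full column rank almost surely, so $P \equiv P_{\bm{Z}^{\perp}}$ is a genuine orthogonal projector of rank $n-p$. Writing $P_{ij}$ for its entries, $\bm{x}^{\text{T}}P\bm{x} = \sum_{i} P_{ii}x_i^2 + \sum_{i \neq j} P_{ij}x_i x_j$. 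As $\bm{x}$ is the indicator of a uniformly random size-$k$ subset of $\{1,\dots,n\}$, we have $x_i^2 = x_i$ with $\E[x_i] = k/n$, and $\E[x_i x_j] = \tfrac{k(k-1)}{n(n-1)}$ for $i \neq j$ (sampling without replacement).

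Substituting these moments gives $\E[\bm{x}^{\text{T}}P\bm{x} \mid \bm{Z}] = \tfrac{k}{n}\sum_i P_{ii} + \tfrac{k(k-1)}{n(n-1)}\sum_{i\neq j}P_{ij}$, and two structural facts finish the computation. First, $\sum_i P_{ii} = \mathrm{tr}(P) = n-p$, the rank of the projector. Second, since the last column of $\bm{Z}$ is the all-ones intercept, $\bm{1}$ lies in the column space of $\bm{Z}$, so $P\bm{1} = \bm{0}$ and hence $\sum_{i,j}P_{ij} = \bm{1}^{\text{T}}P\bm{1} = 0$, which forces $\sum_{i \neq j}P_{ij} = -\mathrm{tr}(P) = -(n-p)$. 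Therefore $\E[\bm{x}^{\text{T}}P\bm{x} \mid \bm{Z}] = (n-p)\bigl(\tfrac{k}{n} - \tfrac{k(k-1)}{n(n-1)}\bigr) = (n-p)\tfrac{k(n-k)}{n(n-1)}$ after simplifying the bracket. Since this result is independent of $\bm{Z}$, it is also the unconditional expectation.

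It then remains only to read off the asymptotics. Because $n-1 = \Theta(n)$, the expectation equals $\Theta\bigl((n-p)\tfrac{k(n-k)}{n^2}\bigr)$; and if $k = \rho n$ for a fixed $\rho \in (0,1)$, then $\tfrac{k(n-k)}{n(n-1)} = \tfrac{\rho(1-\rho)n^2}{n(n-1)} \to \rho(1-\rho) > 0$, so the rate reduces to $\Theta(n-p)$, as claimed. The calculation is essentially routine; the only points requiring care are the correct without-replacement second moment $\E[x_i x_j]$ and the observation that the intercept column makes $P_{\bm{Z}^{\perp}}\bm{1} = \bm{0}$, which is exactly what causes the off-diagonal sum to cancel the trace. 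If one wanted the stronger almost-sure statement that $\bm{x}^{\text{T}}P_{\bm{Z}^{\perp}}\bm{x}$ itself concentrates around this value, the harder step would be a concentration bound on the quadratic form (e.g., a Hanson--Wright-type estimate or a direct variance computation over both $\bm{x}$ and $\bm{Z}$), but the statement as phrased concerns only the expectation.
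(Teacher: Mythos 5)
Your computation is correct, and it is essentially the paper's argument: both proofs evaluate $\E[\bm{x}^{\text{T}}P_{\bm{Z}^{\perp}}\bm{x}]$ by expanding the quadratic form and using the two structural facts that $\mathrm{tr}(P_{\bm{Z}^{\perp}})=n-p$ and that the intercept column puts $\bm{1}$ in the column space of $\bm{Z}$ (the paper phrases the latter as $\bm{Z}(\bm{Z}^{\text{T}}\bm{Z})^{-1}\bm{Z}^{\text{T}}\bm{1}=\bm{1}$ in its Lemma~1, which is the same statement as your $P_{\bm{Z}^{\perp}}\bm{1}=\bm{0}$). The one substantive difference is the randomization model: the paper's step $(c)$ takes $\E[\tilde{\bm{x}}\tilde{\bm{x}}^{\text{T}}]=\frac{k(n-k)}{n^2}I$, i.e.\ it treats the $x_i$ as uncorrelated, which corresponds to independent Bernoulli$(k/n)$ assignment rather than to the stated constraint $\sum_i x_i=k$; you instead use the exact without-replacement moments $\E[x_ix_j]=\frac{k(k-1)}{n(n-1)}$, which respects the budget constraint and yields $(n-p)\frac{k(n-k)}{n(n-1)}$ in place of the paper's $(n-p)\frac{k(n-k)}{n^2}$ --- the same $\Theta$ rate, but your version is the more faithful (and slightly more careful) derivation for the fixed-$k$ setting. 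Your closing remark is also accurate: both proofs establish the rate only in expectation, and a concentration argument would be needed for an almost-sure version, which the theorem does not claim.
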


	Before proving Theorem \ref{theorem1}, we discuss the scaling rate under the setting when $k/n=\rho$ is a constant. In practice, this is the regime of interest since it is reasonable to suppose that a fraction (e.g. 10\%) of users receives DR signals. In this case, the rate achieved by random assignment is $\Theta(n-p)$. This rate is $\Theta$($n$) when $p$ is relatively small compared to $n$. However, when $p$ is large, e.g., $p = n-1$, then this rate becomes $\Theta$($1$). This rate is not desirable as it indicates that the variance of the estimator is not decaying with $n$ even when $n$ is large. Thus we would like to design an assignment strategy which yields an estimator that still possesses a relatively good performance even when $p$ is very close to $n$. In the next section, we show that with optimal assignment, we achieve the optimal rate $\Theta$($n$) when $\frac{k}{n} = \rho$. The proof of Theorem \ref{theorem1} follows.

	\begin{proof}
		We consider a random assignment where $\text{Pr} \{x_i=1\}=\frac{k}{n}$. Then the rate becomes:
		\begin{equation}\label{eq:randomvar}
		\begin{aligned}
		& \E \mathrm{tr}\{\bm{x}^{\text{T}}(I-\bm{Z}(\bm{Z}^{\text{T}}\bm{Z})^{-1}\bm{Z}^{\text{T}})\bm{x}\} \\
		= & k - \E \mathrm{tr} \{\bm{x}^{\text{T}}\bm{Z}(\bm{Z}^{\text{T}}\bm{Z})^{-1}\bm{Z}^{\text{T}}\bm{x}\}  \\
		\overset{(a)} = & k - \mathrm{tr} \{\bm{Z}(\bm{Z}^{\text{T}}\bm{Z})^{-1}\bm{Z}^{\text{T}} \E  \bm{x} \bm{x}^{\text{T}} \}\\
		\overset{(b)} = & k - \mathrm{tr} \{ \bm{Z}(\bm{Z}^{\text{T}}\bm{Z})^{-1}\bm{Z}^{\text{T}} \E (\tilde{\bm{x}} + \frac{k}{n})(\tilde{\bm{x}} + \frac{k}{n})^{\text{T}} \}\\
		\overset{(c)} = & k -   \mathrm{tr} \{  \bm{Z}(\bm{Z}^{\text{T}}\bm{Z})^{-1}\bm{Z}^{\text{T}} ( \frac{k(n-k)}{n^2}I + 0 + \frac{k^2}{n^2} \bm{1} \bm{1}^{\text{T}}) \} \\
		\overset{(d)} = & k - p\frac{k(n-k)}{n^2} - \frac{k^2}{n^2} n\\
		= & (n-p) k (n-k)/n^2, \\
		\end{aligned}
		\end{equation}
		where ($a$) follows from linearity and cyclic permutation of the trace operator; ($b$) follows from defining $\tilde{\bm{x}} = \bm{x} - \frac{k}{n}$; ($c$) follows from multiplying out each terms inside $(\tilde{\bm{x}} + \frac{k}{n})(\tilde{\bm{x}} + \frac{k}{n})^{\text{T}} $ and using the fact that each element in $\tilde{\bm{x}}$ has a zero mean and a variance as $\frac{k(n-k)}{n^2}$; ($d$) follows from $\bm{Z}(\bm{Z}^{\text{T}}\bm{Z})^{-1}\bm{Z}^{\text{T}}$ being a projection matrix onto $\bm{Z}$. Using the fact that the eigenvalues of a projection matrix are either 0 or 1 and $\bm{Z}$ has rank $p$ with probability one, then the trace of $\bm{Z}(\bm{Z}^{\text{T}}\bm{Z})^{-1}\bm{Z}^{\text{T}}$ is $p$ with probability one. In addition, from Lemma \ref{lemma0}, it is shown that $\bm{Z}(\bm{Z}^{\text{T}}\bm{Z})^{-1}\bm{Z}^{\text{T}}\bm{1}= \bm{1}$ if $\bm{Z}$ contains one column as intercept, so that the trace of  $\bm{Z}(\bm{Z}^{\text{T}}\bm{Z})^{-1}\bm{Z}^{\text{T}}\bm{1}\bm{1}^{T} = n$, which completes the equality in $(d)$.
	\end{proof}

	\begin{lemma}\label{lemma0}
		If $\bm{Z}$ is a $n$ by $p$ matrix (where $p<n$) with one column which contains all ones, then $\bm{Z}(\bm{Z}^{\text{T}}\bm{Z})^{-1}\bm{Z}^{\text{T}}\bm{1}= \bm{1}$.
		\QEDA
	\end{lemma}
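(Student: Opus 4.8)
The plan is to recognize $P_{\bm{Z}} := \bm{Z}(\bm{Z}^{\text{T}}\bm{Z})^{-1}\bm{Z}^{\text{T}}$ as the orthogonal projector onto the column space of $\bm{Z}$ and to exploit the hypothesis that $\bm{1}$ is literally one of the columns of $\bm{Z}$, hence lies in that column space. First I would record the standing assumption that makes the expression well defined: $\bm{Z}$ has full column rank $p$, so that $\bm{Z}^{\text{T}}\bm{Z}$ is invertible. This is implicit in the statement (it is why $p<n$ is imposed) and, in the application inside the proof of Theorem \ref{theorem1}, it holds with probability one because the covariates are drawn from a continuous (Gaussian) distribution and the intercept column is appended.

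The core argument is then a one-line cancellation. Since some column of $\bm{Z}$ equals $\bm{1}$, there is a standard basis vector $\bm{e}_j\in\mathbb{R}^p$ with $\bm{Z}\bm{e}_j=\bm{1}$. Substituting this into $P_{\bm{Z}}\bm{1}$ gives
\begin{equation*}
\bm{Z}(\bm{Z}^{\text{T}}\bm{Z})^{-1}\bm{Z}^{\text{T}}\bm{1} = \bm{Z}(\bm{Z}^{\text{T}}\bm{Z})^{-1}\bm{Z}^{\text{T}}\bm{Z}\bm{e}_j = \bm{Z}(\bm{Z}^{\text{T}}\bm{Z})^{-1}(\bm{Z}^{\text{T}}\bm{Z})\bm{e}_j = \bm{Z}\bm{e}_j = \bm{1}.
\end{equation*}
Equivalently, and perhaps cleaner to state: $P_{\bm{Z}}$ is symmetric and idempotent with range equal to the column space of $\bm{Z}$, so $P_{\bm{Z}}\bm{v}=\bm{v}$ for every $\bm{v}$ in that column space; applying this with $\bm{v}=\bm{1}$ finishes the proof. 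I would present the algebraic version as the primary proof since it is self-contained and does not require quoting projection-theory facts.

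There is essentially no hard step here; the only point that requires care is the well-definedness of $(\bm{Z}^{\text{T}}\bm{Z})^{-1}$, i.e., that $\bm{Z}$ is not rank-deficient, which is precisely why the full-rank/continuity assumption is needed. It is worth closing with the remark that this lemma is exactly what justifies step $(d)$ in the proof of Theorem \ref{theorem1}: it yields $\mathrm{tr}\{\bm{Z}(\bm{Z}^{\text{T}}\bm{Z})^{-1}\bm{Z}^{\text{T}}\bm{1}\bm{1}^{\text{T}}\} = \bm{1}^{\text{T}}P_{\bm{Z}}\bm{1} = \bm{1}^{\text{T}}\bm{1} = n$.
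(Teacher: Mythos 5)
Your proof is correct and rests on the same key fact as the paper's: $\bm{1}$ lies in the column space of $\bm{Z}$, so the projector fixes it. The paper reaches this from the left via $\bm{Z}^{\text{T}}(I-\bm{Z}(\bm{Z}^{\text{T}}\bm{Z})^{-1}\bm{Z}^{\text{T}})=\bm{0}$ applied to the all-ones row, whereas you substitute $\bm{1}=\bm{Z}\bm{e}_j$ and cancel directly; this is the same argument up to transposition, with your version being marginally more self-contained since it does not need the symmetry of the projector.
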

	\begin{proof}
		Note that $I - \bm{Z}(\bm{Z}^{\text{T}}\bm{Z})^{-1}\bm{Z}^{\text{T}}$ is the projection matrix which is orthogonal to $\bm{Z}^{\text{T}}$, we then have the following:
		\begin{equation}\label{nulleq}
		\bm{Z}^{\text{T}}(I - \bm{Z}(\bm{Z}^{\text{T}}\bm{Z})^{-1}\bm{Z}^{\text{T}}) = \bm{0},
		\end{equation}
		where $\bm{0}$ is a zero vector that has length $n$.

		Note that $\bm{Z}$ has one column as the intercept, which suggests that $\bm{Z}^{\text{T}}$ has one row where each element takes value one. Since the equality in \eqref{nulleq} holds for every row, we then have:

		\begin{equation}
		\bm{1}^{\text{T}}(I - \bm{Z}(\bm{Z}^{\text{T}}\bm{Z})^{-1}\bm{Z}^{\text{T}}) = 0,
		\end{equation}
		which indicates that $\bm{1}^{\text{T}} \bm{Z}(\bm{Z}^{\text{T}}\bm{Z})^{-1}\bm{Z}^{\text{T}} = \bm{1}^{\text{T}}$.

	\end{proof}

	\section{OPTIMAL ASSIGNMENT}\label{optimal}
	Instead of being randomly assigned into DR programs, users can be optimally allocated to either the treatment group or the control group depending on their covariate information, in order to obtain the best estimator of $\beta$. Mathematically speaking, we optimally assign each $x_i$ to be 0 or 1, in order to minimize the variance of the estimator $\hat{\beta}$. This optimization problem is:
	\begin{equation}\label{prob:original0}
	\begin{aligned}
	& \underset{\hat{\bm{x}}}{\text{maximize}}
	&& \bm{x}^{\text{T}}P_{\bm{Z}^{\perp}}\bm{x}\\
	& \text{subject to}
	&&  \sum_{i=1}^{n}{x}_i = k\\
	&&& {x}_i \in \{1,0\}.
	\end{aligned}
	\end{equation}

	We first discuss the upper bound on the quantity $\bm{x}^{\text{T}}P_{\bm{Z}^{\perp}}\bm{x}$ (which signifies the lower bound for $\Var \hat{\beta}$). We show that it is O($n$).
	Then we establish that under the regime of $k/n=\rho$, there exist algorithms that achieve a rate that meets the upper bound of $O(n)$.

	\subsection{Optimal Rate}

	Before proceeding on analyzing the rate obtained by the proposed strategy, we first discuss the upper bound on the rate of $\bm{x}^{\text{T}}P_{\bm{Z}^{\perp}}\bm{x}$.

	\begin{proposition}\label{proposition0}
		No assignment can achieve a better rate that O$(n)$~\cite{ABtest}.\QEDA
	\end{proposition}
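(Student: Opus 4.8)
The plan is to bound $\bm{x}^{\text{T}}P_{\bm{Z}^{\perp}}\bm{x}$ directly, exploiting that $P_{\bm{Z}^{\perp}} = I-\bm{Z}(\bm{Z}^{\text{T}}\bm{Z})^{-1}\bm{Z}^{\text{T}}$ is an orthogonal projection. Since a projection matrix is symmetric and idempotent, its eigenvalues are all $0$ or $1$, so $0 \preceq P_{\bm{Z}^{\perp}} \preceq I$. Hence for every feasible assignment $\bm{x}$,
\[
\bm{x}^{\text{T}}P_{\bm{Z}^{\perp}}\bm{x} \;\le\; \bm{x}^{\text{T}} I \bm{x} \;=\; \|\bm{x}\|_2^2 .
\]
The second step is to invoke the binary constraint $x_i\in\{0,1\}$, which gives $x_i^2=x_i$ and therefore $\|\bm{x}\|_2^2=\sum_{i=1}^n x_i = k$. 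Combining this with $k\le n$ yields $\bm{x}^{\text{T}}P_{\bm{Z}^{\perp}}\bm{x}\le k\le n$ for \emph{any} choice of the $x_i$'s and \emph{any} realization of $\bm{Z}$, so the growth rate of the objective is $O(n)$; equivalently, by \eqref{eq:varbeta}, $\Var\hat\beta=\Omega(1/n)$, matching the Fisher-information benchmark noted in the introduction.

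There is essentially no technical obstacle here — the only points worth stating carefully are that the bound holds pointwise in $\bm{Z}$ (hence also in expectation), that it uses neither the Gaussianity of $\bm{z}_i$ nor the presence of the intercept column, and that it is tight in the sense that the subsequent sections will exhibit assignments achieving $\Theta(n)$ when $k/n=\rho$. If one prefers an argument paralleling the classical lower bound, the same ceiling follows from the Cram\'er--Rao inequality: $\hat\beta$ is unbiased and the Fisher information about $\beta$ contained in $n$ observations of the model \eqref{prob:problem} is at most $n/\sigma^2$, so $\Var\hat\beta\ge\sigma^2/n$; but the projection-matrix argument above is shorter and self-contained, so that is the route I would take.
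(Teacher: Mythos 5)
Your proof is correct, and it takes a genuinely different route from the paper. The paper's own proof is essentially a pointer: it invokes the Fisher information of the linear model \eqref{prob:problem} and the resulting Cram\'er--Rao-type lower bound on $\Var\hat\beta$, deferring the details to Proposition 1 of the cited reference. You instead bound the objective directly: since $P_{\bm{Z}^{\perp}}$ is an orthogonal projection, $0 \preceq P_{\bm{Z}^{\perp}} \preceq I$, so $\bm{x}^{\text{T}}P_{\bm{Z}^{\perp}}\bm{x} \le \|\bm{x}\|_2^2 = k \le n$ pointwise in $\bm{Z}$ and in the assignment, with no distributional assumptions and in fact with the sharper constant $k$ rather than $n$. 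What each approach buys: your spectral argument is self-contained, elementary, and deterministic, but it only bounds the variance of the ordinary least squares estimator through the formula \eqref{eq:varbeta}; the Fisher-information argument the paper leans on is an information-theoretic bound that applies to \emph{any} unbiased estimator of $\beta$, so it rules out beating the $\sigma^2/n$ floor by switching to a different estimator, not just by a different assignment. Since the proposition as stated concerns the growth rate of $\bm{x}^{\text{T}}P_{\bm{Z}^{\perp}}\bm{x}$ (equivalently the OLS variance), your argument fully suffices for it, and your closing remark correctly identifies the Cram\'er--Rao route as the one paralleling the paper.
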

	\begin{proof}
		The basic idea is to derive the Fisher information with the linear regression model in \eqref{prob:problem}. The inverse of the Fisher information provides a lower bound for the variance of the estimator obtained by least squares and thus an upper bound for the quantity $\bm{x}^{\text{T}}P_{\bm{Z}^{\perp}}\bm{x}$. For more details, please refer to Proposition 1 in~\cite{ABtest}.
	\end{proof}
	In the next subsection we will show that when $\frac{k}{n} = \rho$ which is a constant, we achieve this upper bound.

	\subsection{Achievability of Optimal Rate}
	We first present the main result of this section. We assume that each element of $\bm{Z}_{1:p-1}$ (excluding the intercept column) is drawn independently from a standard Gaussian distribution. This assumption will facilitate the calculation of the main result shown in Theorem \ref{theorem2}. The algorithm associated with Theorem \ref{theorem2} is presented in Algorithm \ref{algo0}.
	\begin{theorem}\label{theorem2}
		Recall that the rate is the growing rate of the inverse of the variance introduced in \eqref{eq:varbeta}. This rate from optimal assignment is of $\Theta$($\frac{k^2log(\frac{n}{k})}{n}$), which is independent of the dimension of covariates. More specifically, when $\frac{k}{n} = \rho$ is a constant, then this rate is  linear rate, i.e., $\Theta$($n$).
		\QEDA
	\end{theorem}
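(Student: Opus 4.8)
The plan is to reduce the maximization in \eqref{prob:original0} to a one-dimensional order-statistics problem and then invoke Gaussian extreme-value estimates. Write $S=\{i:x_i=1\}$, so that $|S|=k$ and $\bm{x}^{\text{T}}P_{\bm Z^\perp}\bm{x}=\|P_{\bm Z^\perp}\bm{x}\|^2$. The key geometric fact is that, after centering the $p-1$ Gaussian columns of $\bm Z$ against the intercept, they span a \emph{uniformly random} $(p-1)$-dimensional subspace of $\bm 1^\perp$; hence $\operatorname{col}(\bm Z)^\perp$ is a uniformly random $(n-p)$-dimensional subspace \emph{of} $\bm 1^\perp$, the containment in $\bm 1^\perp$ being exactly Lemma~\ref{lemma0} (equivalently $P_{\bm Z^\perp}\bm 1=\bm 0$). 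Consequently, for \emph{any} unit vector $\bm v\in\operatorname{col}(\bm Z)^\perp$ one has $P_{\bm Z^\perp}\bm v=\bm v$ and, by Cauchy--Schwarz,
\begin{equation}\label{plan:lb}
\bm{x}^{\text{T}}P_{\bm Z^\perp}\bm{x}=\|P_{\bm Z^\perp}\bm{x}\|^2\ \ge\ \langle P_{\bm Z^\perp}\bm{x},\bm v\rangle^2=\langle\bm{x},\bm v\rangle^2=\Big(\sum_{i\in S}v_i\Big)^{2}.
\end{equation}
The right-hand side is the quantity that Algorithm~\ref{algo0} is built to make large.

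For achievability I would take $\bm v$ to be a uniformly random unit vector in $\operatorname{col}(\bm Z)^\perp$ (for instance $\bm v=P_{\bm Z^\perp}\bm g/\|P_{\bm Z^\perp}\bm g\|$ with $\bm g$ an auxiliary standard Gaussian); since $\operatorname{col}(\bm Z)^\perp$ is uniform inside $\bm 1^\perp$, $\bm v$ is a uniformly random unit vector of $\bm 1^\perp$, so its coordinates are, up to a common shift $-\bar g$ and a scaling $\|\bm g-\bar g\bm 1\|/\sqrt n=1+o(1)$, i.i.d.\ $N(0,1/n)$. Let $S^\star$ collect the $k$ largest coordinates of $\bm v$ (and, symmetrically, also consider the $k$ smallest, keeping whichever makes $|\sum_{i\in S}v_i|$ larger -- this is the selection rule of Algorithm~\ref{algo0}). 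The analytic heart is the classical estimate that the sum of the $k$ largest order statistics of $n$ i.i.d.\ standard normals is $\Theta\!\big(k\sqrt{\log(n/k)}\big)$ with high probability: the lower bound from $\sum_{j\le k}g_{(j)}\ge k\,g_{(k)}$ with $g_{(k)}=\Theta(\sqrt{\log(n/k)})$, and the upper bound from $\sum_{j\le k}g_{(j)}\le kt+\sum_{i}(g_i-t)_+$ at $t=\sqrt{2\log(n/k)}$ together with a Bernstein bound on the sub-exponential sum $\sum_i(g_i-t)_+$. Feeding this into \eqref{plan:lb} yields $\bm{x}^{\text{T}}P_{\bm Z^\perp}\bm{x}=\Omega\!\big(k^2\log(n/k)/n\big)$, uniformly in $p$ as long as $p<n$.

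For the converse, Proposition~\ref{proposition0} already gives $\bm{x}^{\text{T}}P_{\bm Z^\perp}\bm{x}=O(n)$ for every assignment, so when $k/n=\rho$ the achievable rate $\Omega\!\big(k^2\log(n/k)/n\big)=\Omega\!\big(\rho^2\log(1/\rho)\,n\big)=\Omega(n)$ already meets this ceiling and the optimal rate is exactly $\Theta(n)$, with no dependence on $p$. To obtain the sharper matching bound $O\!\big(k^2\log(n/k)/n\big)$ on the optimum itself -- valid in the high-dimensional setting ($p=n-1$, or more generally when $n-p$ is not too large) -- write $P_{\bm Z^\perp}=\bm U\bm U^{\text{T}}$ with $\bm U\in\mathbb R^{n\times(n-p)}$ having orthonormal columns, so that $\bm{x}^{\text{T}}P_{\bm Z^\perp}\bm{x}=\max_{\|\bm w\|=1}\big(\sum_{i\in S}\langle\bm u^{(i)},\bm w\rangle\big)^{2}$ over the rows $\bm u^{(i)}$ of $\bm U$; then fix an $\varepsilon$-net of the $(n-p)$-sphere, apply the Gaussian order-statistic tail bound to the (approximately $N(0,1/n)$) sequence $\langle\bm u^{(i)},\bm w\rangle$ at each net point, and union bound, which controls $\max_{|S|=k}$ simultaneously over all directions as long as the per-direction tail beats $(C/\varepsilon)^{n-p}$.

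The step I expect to be the main obstacle is the extreme-value computation itself: making rigorous that a greedy top-$k$ pick from a random direction in $\bm 1^\perp$ produces a partial sum of the \emph{right} order $k\sqrt{\log(n/k)}$ -- not $k$, and not $k\sqrt{\log n}$ -- and verifying that replacing a genuine i.i.d.\ $N(0,1/n)$ sequence by the coordinates of a uniform unit vector of $\bm 1^\perp$ (i.e.\ the centering by $\bar g$ and the random rescaling by $\|\bm g-\bar g\bm 1\|$) perturbs that partial sum only at lower order. A secondary, quantitative obstacle appears in the converse: the $\varepsilon$-net union bound caps how large $n-p$ may be, which is the technical reason the clean $\Theta\!\big(k^2\log(n/k)/n\big)$ is stated in the high-dimensional regime while the dimension-free $O(n)$ bound of Proposition~\ref{proposition0} is what actually secures the headline conclusion $\Theta(n)$ for constant $k/n$.
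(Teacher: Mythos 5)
Your plan is correct and its analytic core coincides with the paper's: pick the $k$ largest coordinates of a (unit) direction in $\operatorname{col}(\bm Z)^{\perp}$, use the Gaussian order-statistic fact that this top-$k$ sum is of order $k\sqrt{\log(n/k)}$ (the paper's Lemma~\ref{lemma2} bounds the sum below by $k\,y_{(n-k+1)}$ with $\E y_{(n-k+1)}\ge C\sqrt{\log(n/k)}$), and combine with the Fisher-information ceiling $O(n)$ of Proposition~\ref{proposition0} to get $\Theta(n)$ when $k/n=\rho$. Where you genuinely diverge is in how the problem is reduced to one dimension: the paper invokes the monotonicity of $\Var\hat\beta$ in $p$ (Lemma~\ref{increasing}, imported from \cite{ABtest}) to pass to the worst case $p=n-1$, where $P_{\bm Z^{\perp}}=\bm y\bm y^{\text{T}}/\|\bm y\|^2$ makes the reduction an identity; you instead use the Cauchy--Schwarz bound $\bm x^{\text{T}}P_{\bm Z^{\perp}}\bm x\ge\langle\bm x,\bm v\rangle^2$ for any unit $\bm v\in\operatorname{col}(\bm Z)^{\perp}$, which yields the $\Omega(k^2\log(n/k)/n)$ lower bound uniformly in $p<n$ without the monotonicity lemma, and reduces to the paper's argument exactly when $p=n-1$. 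Two further differences are worth noting. First, you are more careful than the paper about the distribution of the null-space direction: since $\bm 1^{\text{T}}\bm y=0$, the coordinates cannot be literally i.i.d.\ standard Gaussian as Lemma~\ref{Gaussiannull} asserts; your ``uniform on the sphere of $\bm 1^{\perp}$, i.e.\ i.i.d.\ $N(0,1/n)$ up to a common centering and an $1+o(1)$ rescaling'' formulation (with the check that the $O(k/\sqrt n)$ centering term is lower order than $k\sqrt{\log(n/k)}$) is the honest version of that lemma, and it also sidesteps the paper's delta-method step $(a)$ in \eqref{eq:optimalvar} by working with high-probability bounds rather than a ratio of expectations. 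Second, your $\varepsilon$-net sketch for the matching upper bound $O(k^2\log(n/k)/n)$ goes beyond the paper, whose proof only establishes the lower bound plus the $O(n)$ cap (so the stated $\Theta(k^2\log(n/k)/n)$ should, as you correctly observe, be read as specific to the high-dimensional regime -- for small $p$ and small $k$ even random assignment exceeds that level); your observation that the headline $\Theta(n)$ for constant $k/n$ needs only Proposition~\ref{proposition0} is exactly how the paper closes the argument.
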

		\begin{algorithm}
			\KwIn{Covariates $\bm{Z}$.}
			\KwOut{Rate of optimal assignment and the corresponding optimal assignment strategy when $p = n-1$.}
			Reduce the optimization problem in \eqref{prob:original0} to \eqref{prob:p=n-1} using Lemma \ref{increasing}.

			Compute the null space of $\bm{Z}_{n,n-1}^{\text{T}}$, denote it by $\bm{y}$. Each element of $\bm{y}$ should independently follow a standard Gaussian distribution, according to Lemma \ref{Gaussiannull}.

			Find the lower bound for the $k^{th}$ largest element in $\bm{y}$ (suppose that this element is non negative). This lower bound is shown in Lemma \ref{lemma2}.

			The optimal value of the objective function in \eqref{prob:p=n-1} is at least $\frac{k^2}{n}$ times this lower bound. The rate of this optimal value is stated in Theorem \ref{theorem2}. The optimal assignment is to assign those $x_i$'s corresponding to the $k$ largest $y_i$'s in $\bm{y}$ to be 1's and the rest to be 0's.

			\caption{Procedures to obtain the rate shown in Theorem \ref{theorem2}.}
			\label{algo0}
		\end{algorithm}

	Before proving Theorem \ref{theorem2}, we first show in Lemma \ref{increasing} that the worst case scenario for the rate is when $p = n - 1$. This scenario provides a minimum on the quantity $\bm{x}^{\text{T}}P_{\bm{Z}^{\perp}}\bm{x}$ for every $p$ where $p<n$, which provides a maximum for $\Var \hat{\beta}$ for every $p < n$. Thus if we can show in Theorem \ref{theorem2} that in the worst case scenario where $p = n - 1$, the growing rate of quantity $\bm{x}^{\text{T}}P_{\bm{Z}^{\perp}}\bm{x}$ is $\Theta(n)$ when $\frac{k}{n} = \rho$ is a constant, then this rate holds for all $p$ where $p<n-1$.

	\begin{lemma}\label{increasing}
		$\Var \hat{\beta}$ is increasing in $p$. Consequently, if $p$ = $n-1$, the estimator yields the worse case performance \cite{ABtest}:
		\begin{equation}
		\inf_{1 \leq p < n}  \frac{\bm{x}^{\text{T}}P_{\bm{Z}_{n,p}^{\perp}}\bm{x}}{n} = \frac{\bm{x}^{\text{T}}P_{\bm{Z}_{n,n-1}^{\perp}}\bm{x}}{n}
		\end{equation}
		\QEDA
	\end{lemma}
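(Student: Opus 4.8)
The plan is to reduce the claim to a monotonicity statement about orthogonal projections onto nested subspaces. By \eqref{eq:varbeta} we have $\Var\hat\beta = \sigma^2/(\bm{x}^{\text{T}}P_{\bm{Z}^{\perp}}\bm{x})$, so showing that $\Var\hat\beta$ is increasing in $p$ is equivalent to showing that, for each fixed assignment $\bm{x}$ and each realization of the covariates, the quantity $\bm{x}^{\text{T}}P_{\bm{Z}_{n,p}^{\perp}}\bm{x}$ is non-increasing in $p$. First I would record that $P_{\bm{Z}_{n,p}^{\perp}} = I - \bm{Z}_{n,p}(\bm{Z}_{n,p}^{\text{T}}\bm{Z}_{n,p})^{-1}\bm{Z}_{n,p}^{\text{T}}$ is the orthogonal projection onto the orthogonal complement of the column space of $\bm{Z}_{n,p}$, which is well defined because $\bm{Z}_{n,p}$ has full column rank $p$ with probability one when $p<n$.

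The key step is the observation that $\bm{Z}_{n,p+1}$ is obtained from $\bm{Z}_{n,p}$ by appending one additional covariate column, so $\mathrm{col}(\bm{Z}_{n,p}) \subseteq \mathrm{col}(\bm{Z}_{n,p+1})$ and hence $\mathrm{col}(\bm{Z}_{n,p+1})^{\perp} \subseteq \mathrm{col}(\bm{Z}_{n,p})^{\perp}$. For any nested pair of subspaces $W_2 \subseteq W_1$ one has $P_{W_1} - P_{W_2} = P_{W_1 \cap W_2^{\perp}} \succeq 0$, and therefore $\bm{x}^{\text{T}}P_{W_1}\bm{x} \ge \bm{x}^{\text{T}}P_{W_2}\bm{x}$ for every vector $\bm{x}$. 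Taking $W_1 = \mathrm{col}(\bm{Z}_{n,p})^{\perp}$ and $W_2 = \mathrm{col}(\bm{Z}_{n,p+1})^{\perp}$ gives $\bm{x}^{\text{T}}P_{\bm{Z}_{n,p}^{\perp}}\bm{x} \ge \bm{x}^{\text{T}}P_{\bm{Z}_{n,p+1}^{\perp}}\bm{x}$, which is exactly the desired monotonicity; it then transfers verbatim to the conditional variance and, by monotone integration, to its expectation if one prefers the unconditional statement.

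Given the monotonicity, the second assertion is immediate: since $p \mapsto \bm{x}^{\text{T}}P_{\bm{Z}_{n,p}^{\perp}}\bm{x}$ is non-increasing on $\{1,\dots,n-1\}$, its minimum over that range is attained at $p=n-1$, so $\inf_{1\le p<n}\bm{x}^{\text{T}}P_{\bm{Z}_{n,p}^{\perp}}\bm{x}/n = \bm{x}^{\text{T}}P_{\bm{Z}_{n,n-1}^{\perp}}\bm{x}/n$, i.e.\ $p=n-1$ is the worst-case dimension for the estimator.

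I do not expect a serious obstacle: the proof is essentially the Loewner-order comparison of projections onto nested subspaces. The only points needing care are (i) justifying that $\bm{Z}_{n,p}$ has full column rank almost surely, which follows from the i.i.d.\ Gaussian assumption on the covariates together with $p<n$, and (ii) phrasing the ``appending a column enlarges the column space'' step so that it remains valid in the degenerate case where the new column already lies in the existing span—there the projection is unchanged and the inequality holds with equality, which is still consistent with the claim. Since the statement is also recorded in \cite{ABtest}, a short self-contained argument of this form is all that is required.
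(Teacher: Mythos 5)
Your proof is correct, and it is more informative than what the paper actually provides: the paper does not prove this lemma at all, it simply points the reader to Lemma~5 of \cite{ABtest}. Your argument is the natural self-contained one, and it is the standard way this monotonicity is established: nesting of column spaces, $\mathrm{col}(\bm{Z}_{n,p}) \subseteq \mathrm{col}(\bm{Z}_{n,p+1})$, implies the reverse nesting of orthogonal complements, and the Loewner comparison $P_{W_1} - P_{W_2} = P_{W_1 \cap W_2^{\perp}} \succeq 0$ for $W_2 \subseteq W_1$ gives $\bm{x}^{\text{T}}P_{\bm{Z}_{n,p}^{\perp}}\bm{x} \ge \bm{x}^{\text{T}}P_{\bm{Z}_{n,p+1}^{\perp}}\bm{x}$ pointwise in $\bm{x}$ and in the covariate realization, so the infimum over $1 \le p < n$ sits at $p = n-1$. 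Two minor remarks. First, in this paper's convention the intercept is the \emph{last} column of $\bm{Z}$, so ``appending a covariate column'' is not literally how $\bm{Z}_{n,p}$ grows to $\bm{Z}_{n,p+1}$; this is harmless because only the column space enters $P_{\bm{Z}^{\perp}}$ and the set of columns of the smaller matrix is contained in that of the larger one, but it is worth a sentence. Second, your observation that the inequality is deterministic (it holds for every realization and every fixed $\bm{x}$, with equality in the degenerate case where the added column lies in the existing span) means no expectation or almost-sure invertibility argument is strictly needed for the displayed infimum identity; the full-column-rank remark is only needed to write $P_{\bm{Z}^{\perp}}$ in the form $I-\bm{Z}(\bm{Z}^{\text{T}}\bm{Z})^{-1}\bm{Z}^{\text{T}}$, and can be bypassed by defining $P_{\bm{Z}^{\perp}}$ as the projection onto $\mathrm{col}(\bm{Z})^{\perp}$, exactly as you note.
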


	\begin{proof}
		This is a general result about linear estimation and the interested reader can refer to Lemma 5 in~\cite{ABtest}.
	\end{proof}

	When $p = n-1$, the rank of $\bm{Z}_{n,n-1}$ is one with probability one, thus we write $P_{\bm{Z}_{n,p}^{\perp}} = \frac{\bm{y}\bm{y}^{\text{T}}}{||\bm{y}||^2}$, where $\bm{y}$ is in the null space of $\bm{Z}_{n,n-1}^{\text{T}}$ \cite{ABtest}, i.e., $\bm{Z}^{\text{T}}_{n,n-1}\bm{y} = \bm{0}$. Based on this observation, $\bm{x}^{\text{T}}P_{\bm{Z}^{\perp}}\bm{x}$ is written into a simpler form as $\frac{\bm{x}^{\text{T}}\bm{y}\bm{y}^{\text{T}}\bm{x}}{||\bm{y}||_2^2} = \frac{(\bm{y}^{\text{T}}\bm{x})^2}{||\bm{y}||^2}$. The problem is then to maximize $\frac{(\bm{y}^{\text{T}}\bm{x})^2}{||\bm{y}||^2}$ under the constraint that we only get to assign $k$ $x_i$'s to be 1's and the rest to be 0's. The optimization problem is:
	\begin{equation}\label{prob:p=n-1}
	\begin{aligned}
	& \underset{\hat{\bm{x}}}{\text{maximize}}
	&& \frac{(\bm{y}^{\text{T}}\bm{x})^2}{||\bm{y}||^2}\\
	& \text{subject to}
	&&  \sum_{i=1}^{n}{x}_i = k\\
	&&& {x}_i \in \{1,0\}.
	\end{aligned}
	\end{equation}

	To solve the optimization problem in \eqref{prob:p=n-1}, we need to find $k$ $y_i$'s in $\bm{y}$ such that their sum is maximized, where $y_i$ is the $i^{th}$ element of the vector $\bm{y}$. We observe that it actually suffices to provide a lower bound on this maximum sum to prove the rate.

	To provide this lower bound we need to know the structure of $\bm{y}$. We then show in Lemma \ref{Gaussiannull} that if $\bm{y}$ is in the null space of $\bm{Z}_{n,n-1}$, then each $y_i$ can be constructed to be drawn from an i.i.d. standard Gaussian distribution. Based on this observation, the problem is further reduced to find the lower bound on the $k^{th}$ largest $y_i$, assuming that $2k$ is smaller than $n$ to ensure that with overwhelming probability the $k^{th}$ largest $y_i$ is non negative. Let us refer to this statistic as the $(n-k+1)^{th}$ order statistic of $\bm{y}$ and denote it by $y_{(n-k+1)}$ such that $y_{(1)} \leq y_{(2)} \leq \dots \leq y_{(n)}$. We present a lower bound for $y_{(n-k+1)}$ in Lemma \ref{lemma2} when $k$ is smaller than $\frac{n}{2}$. This lower bound facilitates the final proof for Theorem \ref{theorem2}. The proof of Lemma \ref{Gaussiannull} and Lemma \ref{lemma2} is left in the appendix.

	\begin{lemma}\label{Gaussiannull}
		The basis of the null space of $\bm{Z}_{n,n-1}^{\text{T}}$ can be constructed as an i.i.d. standard Gaussian vector with length $n$, when each element of $\bm{Z}_{1:n-2}$ is independently drawn from a standard Gaussian distribution and the last column of $\bm{Z}$ is an all one column.
		\QEDA
	\end{lemma}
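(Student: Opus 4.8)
The plan is to exploit the rotational invariance of the i.i.d.\ Gaussian law, after first factoring out the intercept direction $\bm 1$. Write the columns of $\bm Z_{n,n-1}$ as $\bm c_1,\dots,\bm c_{n-1}$, where $\bm c_1,\dots,\bm c_{n-2}$ are i.i.d.\ $N(\bm 0,I_n)$ and $\bm c_{n-1}=\bm 1$. With probability one these $n-1$ vectors are linearly independent, so $\bm Z_{n,n-1}^{\text T}$ has rank $n-1$ and its null space is the one‑dimensional orthogonal complement of the column space $V:=\operatorname{span}(\bm c_1,\dots,\bm c_{n-2},\bm 1)$. Put $W:=\bm 1^{\perp}$ and $P_W:=I-\tfrac1n\bm 1\bm 1^{\text T}$. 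Since $\bm 1\in V$, we may split $V=\operatorname{span}(\bm 1)\oplus\operatorname{span}(P_W\bm c_1,\dots,P_W\bm c_{n-2})$ orthogonally, and hence the null space of $\bm Z_{n,n-1}^{\text T}$ is exactly the orthogonal complement, computed inside $W$, of $\operatorname{span}(P_W\bm c_1,\dots,P_W\bm c_{n-2})$.

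Next I would invoke invariance. The vectors $P_W\bm c_1,\dots,P_W\bm c_{n-2}$ are i.i.d., and for every orthogonal $Q$ fixing $W$ (equivalently, fixing $W^{\perp}=\operatorname{span}(\bm 1)$) one has $QP_W=P_WQ$ and $Q\bm c_j\overset{d}{=}\bm c_j$; therefore the joint law of $(P_W\bm c_j)_j$, and with it the law of $\operatorname{span}(P_W\bm c_1,\dots,P_W\bm c_{n-2})$ and of its orthogonal complement inside $W$, is invariant under all rotations of the $(n-1)$‑dimensional space $W$. Consequently a unit vector $\bm y$ spanning that one‑dimensional complement is distributed uniformly on the unit sphere of $W$. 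The very same invariance applied to an independent $\bm g\sim N(\bm 0,I_n)$ shows that $P_W\bm g/\|P_W\bm g\|$ is also uniform on the unit sphere of $W$, so $\bm y$ has the same law, up to sign and positive scale, as $P_W\bm g=\bm g-\bar g\,\bm 1$ with $\bar g:=\tfrac1n\sum_i g_i$.

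Finally, since the objective $\frac{(\bm y^{\text T}\bm x)^2}{\|\bm y\|^2}$ in \eqref{prob:p=n-1} depends only on the direction of $\bm y$, we are free to take $\bm y=\bm g-\bar g\,\bm 1$; the correction $\bar g\,\bm 1$ merely shifts every coordinate by the same amount $\bar g=O_p(n^{-1/2})$, so the entries of $\bm y$ may be handled as i.i.d.\ standard Gaussians — which is the asserted construction and the form used in Lemma \ref{lemma2}. The step that I expect to require the most care is exactly this last one: strictly speaking the $y_i$ are only exchangeable jointly Gaussian (with covariance $I-\tfrac1n\bm 1\bm 1^{\text T}$) rather than independent, because the intercept column forces $\sum_i y_i=0$, so one must verify that discarding this $O(n^{-1/2})$ per‑coordinate perturbation does not affect the lower bound on the order statistic $y_{(n-k+1)}$ proved in Lemma \ref{lemma2}; this is harmless because that order statistic is of order $\sqrt{\log(n/k)}$, which dominates the perturbation.
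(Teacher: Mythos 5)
Your proposal is correct, and its engine is the same one the paper uses: invariance of the i.i.d.\ Gaussian ensemble under orthogonal transformations that fix the intercept direction $\bm 1$. The execution differs in a way worth noting. The paper argues that $\bm U^{\text T}\bm y \overset{d}{=} \bm y$ for every orthogonal $\bm U$ with $\bm 1^{\text T}\bm U=\bm 1^{\text T}$ and then asserts that $\bm y\sim N(\bm 0,I_n)$, i.e.\ literally i.i.d.\ entries; you instead split the column space as $\operatorname{span}(\bm 1)\oplus\operatorname{span}(P_W\bm c_1,\dots,P_W\bm c_{n-2})$ with $P_W=I-\tfrac1n\bm 1\bm 1^{\text T}$, and conclude that the null direction is uniform on the unit sphere of $W=\bm 1^{\perp}$, hence equal in law (up to sign and scale) to $\bm g-\bar g\,\bm 1$ with $\bm g\sim N(\bm 0,I_n)$. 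Your version is actually the sharper one: since $\bm Z^{\text T}\bm y=\bm 0$ forces $\bm 1^{\text T}\bm y=0$, the null vector cannot have covariance exactly $I_n$ (that would make $\Var(\bm 1^{\text T}\bm y)=n$, not $0$), so the correct law is the projected Gaussian with covariance proportional to $I-\tfrac1n\bm 1\bm 1^{\text T}$ — exchangeable, not independent — and the paper's ``i.i.d.'' statement holds only up to this rank-one correction. You also supply the needed patch for the downstream use: because the objective in \eqref{prob:p=n-1} is scale- and direction-invariant, one may take $\bm y=\bm g-\bar g\,\bm 1$, and the common shift $\bar g$ does not hurt the order-statistic bound of Lemma \ref{lemma2}; in fact, by linearity of expectation, $\E\,y_{(n-k+1)}=\E\,g_{(n-k+1)}-\E\,\bar g=\E\,g_{(n-k+1)}$, so the lower bound $C\sqrt{\log(n/k)}$ carries over exactly rather than merely up to an $O_p(n^{-1/2})$ error. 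So what you buy over the paper's proof is a precise identification of the law of $\bm y$ and an explicit justification that the i.i.d.\ idealization is harmless where it is used, whereas the paper's argument leaves that reconciliation implicit.
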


	\begin{lemma}\label{lemma2}
		Let $\bm{y}$ satisfies $\bm{Z}_{n,n-1}^{\text{T}}\bm{y} = 0$. If each $y_i$ is independent and follows standard Gaussian distribution, then
		$\E y_{(n-k+1)} \geq C\sqrt{\log{\frac{n}{k}}}$, where C is a positive constant and $\frac{k}{n} < \frac{1}{2}$.
		\QEDA
	\end{lemma}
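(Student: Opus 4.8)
The plan is to reduce to a statement about i.i.d.\ Gaussian order statistics and then estimate the relevant order statistic via a binomial large-deviation bound. By Lemma~\ref{Gaussiannull} we may and do take $y_1,\dots,y_n$ to be i.i.d.\ $\mathcal N(0,1)$, so that $y_{(n-k+1)}$ is exactly the $k$-th largest among $n$ standard normals. For $t\ge 0$ let $N_t=\#\{i:y_i>t\}\sim\mathrm{Bin}(n,\bar\Phi(t))$ with $\bar\Phi(t)=\Pr[\mathcal N(0,1)>t]$; the basic observation is the identity $\{y_{(n-k+1)}>t\}=\{N_t\ge k\}$, which turns the order statistic into a binomial tail.

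First I would write $\E y_{(n-k+1)}=\int_0^\infty\Pr[N_t\ge k]\,dt-\E[(y_{(n-k+1)})^-]$ and lower bound the integral. Choosing the threshold $t^\star$ so that $n\bar\Phi(t^\star)=\tfrac32 k$ (possible with $t^\star\ge 0$ once $k/n$ is bounded away from $1/2$), on the entire interval $[0,t^\star]$ the mean of $N_t$ is at least $k$, so a Chernoff lower-tail bound for $\mathrm{Bin}(n,\bar\Phi(t))$ gives $\Pr[N_t\ge k]\ge\Pr[N_{t^\star}\ge k]\ge 1-e^{-ck}\ge\tfrac12$ once $k$ exceeds an absolute constant; hence $\int_0^\infty\Pr[N_t\ge k]\,dt\ge t^\star/2$.

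Second I would discard the negative part: $\E[(y_{(n-k+1)})^-]=\int_0^\infty\Pr[y_{(n-k+1)}\le-u]\,du$, and $\Pr[y_{(n-k+1)}\le-u]=\Pr[\mathrm{Bin}(n,\bar\Phi(u))\ge n-k+1]$; since $\bar\Phi(u)\le\tfrac12$ while $n-k+1>n/2$ with a fixed multiplicative gap (as $k/n<\tfrac12-\epsilon$), this is a binomial upper tail well below its mean, hence exponentially small in $n$, and the extra decay of $\bar\Phi(u)$ in $u$ makes the whole $u$-integral $O(e^{-cn})$. Finally I would check $t^\star=\bar\Phi^{-1}(\tfrac{3k}{2n})\ge C'\sqrt{\log(n/k)}$, which is immediate from the elementary two-sided Gaussian tail bounds $\tfrac{c_0}{t}e^{-t^2/2}\le\bar\Phi(t)\le e^{-t^2/2}$ (valid for $t\ge1$): they force $\bar\Phi^{-1}(x)=\Theta(\sqrt{\log(1/x)})$. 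Combining, $\E y_{(n-k+1)}\ge t^\star/2-O(e^{-cn})\ge C\sqrt{\log(n/k)}$ for all large $n$, and the finitely many small $n$ are absorbed into $C$.

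The step I expect to be the real obstacle is making the two ends of the range of $k$ fit together under a single constant. When $k=o(n)$ the argument above gives the sharp $\Theta(\sqrt{\log(n/k)})$ with a universal $C$, but when $k/n$ creeps up toward $1/2$ the $k$-th largest value slides toward the sample median, whose mean tends to $0$, so no universal positive constant can survive; the clean statement therefore needs $k/n$ bounded away from $1/2$, with $C$ permitted to depend on that separation---precisely the regime $k/n=\rho$ in which Theorem~\ref{theorem2} invokes the lemma. Care is also needed in the Chernoff steps to keep the exponents uniform in $u$ (for the negative part) and to ensure $t^\star$ is not merely nonnegative but bounded below when $k$ is a constant fraction of $n$.
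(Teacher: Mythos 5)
Your argument is correct in substance and lands on the same threshold, $\bar\Phi^{-1}\bigl(\Theta(k/n)\bigr)\asymp\sqrt{\log(n/k)}$, that the paper uses, but it converts the exceedance count into a bound on $\E y_{(n-k+1)}$ by a genuinely different mechanism. The paper stays at the level of expected counts: it seeks $C$ with $nQ\bigl(C\sqrt{\log(n/k)}\bigr)\ge k$, proves this via explicit lower bounds on the Gaussian $Q$-function (one bound for $k/n\in(1/4,1/2)$, another for $k/n\le 1/4$, yielding explicit constants such as $C\approx 0.445$ and $C=\sqrt{\pi\log(1-4\epsilon^2)/(2\log(\tfrac12-\epsilon))}$), and then invokes the fact cited from \cite{MaxGaussian} that an expected count of at least $k$ exceedances forces the expected $k$-th largest value above the threshold. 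You instead use concentration: the identity $\{y_{(n-k+1)}>t\}=\{N_t\ge k\}$, a Chernoff bound giving $\Pr[N_t\ge k]\ge\tfrac12$ on $[0,t^\star]$, the tail-integral representation of the expectation, and a separate exponentially small bound on the negative part. What your route buys is self-containedness and rigor: the expected-count implication is false for general distributions unless the negative part of the order statistic is controlled, and your last step supplies exactly that control, whereas the paper's route buys sharper, explicitly computed constants. Two repairs are needed but are routine: with the factor $3/2$, the requirement $t^\star\ge 0$ forces $k/n\le 1/3$ rather than merely $k/n$ bounded away from $1/2$, so for $k/n\in(1/3,\tfrac12-\epsilon)$ you should take slack $1+\delta$ with $\delta$ depending on $\epsilon$ (this only changes constants); and your closing caveat that no constant can be uniform as $k/n\uparrow\tfrac12$ is consistent with the paper's own treatment, since its constant $\sqrt{\pi\log(1-4\epsilon^2)/(2\log(\tfrac12-\epsilon))}$ likewise tends to $0$ as $\epsilon\to 0$, so the lemma should indeed be read with $C$ depending on the separation of $k/n$ from $\tfrac12$, which is the regime $k/n=\rho$ used in Theorem \ref{theorem2}.
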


	Now we can use the introduced lemmas to prove Theorem \ref{theorem2}. A summary is presented in Algorithm \ref{algo0}, illustrating the procedures to obtaining the rate stated in Theorem \ref{theorem2} using the proposed lemmas. This algorithm also provides the optimal assignment strategy when $p = n - 1$.

	\begin{proof}[Proof of Theorem \ref{theorem2}]

		We will focus on the case when $p = n-1$ since it provides the worst case rate for every $p < n$, as stated in Lemma \ref{increasing}.

		From lemma \ref{Gaussiannull}, we know that $\bm{y} \sim N(0, \text{I}_{n})$, we then obtain the following results:
		\begin{equation} \label{eq:optimalvar}
		\begin{aligned}
		& \underset{\bm{x}, x_i \in \{1,0\}, \sum x_i = k}{\text{max}}\E \frac{(\bm{y}^{\text{T}}\bm{x})^2}{||\bm{y}||_2^2} \\
		& \geq \E \frac{\{(y_{(n)}+...+y_{(n-k+1)})^2\}}{||\bm{y}||_2^2}\\
		&  \overset{(a)} = \frac{\E \{(y_{(n)}+...+y_{(n-k+1)})^2\}}{\E ||\bm{y}||_2^2} + O(\frac{1}{n})\\
		& \geq \frac{\E \{(ky_{(n-k+1)})^2\}}{n} + O(\frac{1}{n})\\
		& = k^2 \frac{\E \{y^2_{(n-k+1)}\}}{n} + O(\frac{1}{n})\\
		& \overset{(b)} \geq k^2 \frac{(\E y_{(n-k+1)})^2}{n} + O(\frac{1}{n})\\
		& \overset{(c)}\geq k^2 \frac{C^2log\frac{n}{k}}{n} + O(\frac{1}{n}),\\
		\end{aligned}
		\end{equation}
		where ($a$) is based on the multivariate delta method~\cite{deltamethod}, ($b$) comes from Jensen's inequality and ($c$) is based on Lemma \ref{Gaussiannull} and Lemma \ref{lemma2}.

		Specifically, if $\frac{k}{n} = \rho$, then \eqref{eq:optimalvar} can be written as:
		\begin{equation}
		k^2 \frac{C^2log\frac{n}{k}}{n} = C^2\rho^2\log{(\rho^{-1}})n = \Theta(n).
		\end{equation}

		Another interesting case is when $k = \log{n}$, \eqref{eq:optimalvar} can be written as:
		\begin{equation}
		k^2 \frac{C^2\log{\frac{n}{k}}}{n} = C^2(\log{n} - \log{\log{n}}) = \Theta(\log{n}).
		\end{equation}
	\end{proof}

	As can be seen from Theorem \ref{theorem2}, we will obtain the optimal rate $\Theta(n)$ by replacing $\frac{k}{n}$ as a constant. This indicates that with optimal assignment, the estimation variance will indeed decay with $n$ instead of being a constant as shown in Theorem \ref{theorem1}, when the dimension of the covariates $p$ is comparable to $n$. This is very interesting as it indicates that even in a high dimensional setting, the variance of the estimator will decay optimally by solving the variance minimization problem.

		\section{A TRACTABLE ALTERNATIVE}\label{sec:SDP}
		Algorithm \ref{algo0} provides a simple way to find the optimal assignment when $p = n - 1$. When $p$ is less than $n-1$, this algorithm cannot be applied. In fact, the optimization problem in \eqref{prob:original0} is a nonconvex quadratic optimization problem which can be NP-hard. In this section, we present a tractable approximate algorithm by relaxing the original combinatorial optimization problem into and semidefinite program. We then demonstrate that this SDP problem approximates the original problem with a performance ratio that is better than $\frac{k}{n}$ when $\frac{k}{n}$ is in the range of $(0.2, 0.9995)$. This procedure follows the results established in \cite{SDP}.


		We first revisit the original variance minimization problem in \eqref{prob:original0} and transform $\bm{x}$ into $\bm{x} = (\hat{\bm{x}} + 1)/2$. Denote each element in $\hat{\bm{x}}$ as $\hat{x}_i$, then each $\hat{x}_i$ takes value in $\{-1,1\}$. Therefore the variance minimization problem is written as:
		\begin{equation}\label{prob:original}
		\begin{aligned}
		& \underset{\hat{\bm{x}}}{\text{maximize}}
		&& \frac{1}{4}(\hat{\bm{x}}+1)^{\text{T}}P_{\bm{Z}^{\perp}}(\hat{\bm{x}}+1)\\
		& \text{subject to}
		&& \sum_{i=1}^{n}\hat{x}_i = 2k-n\\
		&&& \hat{x}_i \in \{1,-1\}.
		\end{aligned}
		\end{equation}

		This is a Dense-$k$-Subgraph (DSP) problem with existence of self edges. To illustrate this, let element at row $i$ and column $j$ in matrix $P_{\bm{Z}^{\perp}}$ be denoted as edge weight $w_{ij}$ (except that $w_{ii}$ is half of the value on the diagonals) associated with vertex $i$ and vertex $j$, then \eqref{prob:original} is trying to find a set of $k$ vertices such that the sum of edge weights induced by these vertices are maximized. Since the problem presented in \eqref{prob:original} contains binary variables, we relax this problem into a SDP formulation:
		\begin{equation}\label{eq:SDP}
		\begin{aligned}
		&  \underset{\bm{X}, \hat{\bm{x}}}{\text{maximize}}
		& &   \frac{1}{4}\sum_i\sum_j w_{ij}(1+\hat{x}_i+\hat{x}_j+X_{ij})\\
		& \text{subject to}
		&&  \sum_i \hat{x}_i = 2k-n\\
		&&&  X_{ii} = 1\\
		&&& \sum_i\sum_j X_{i,j} = (2k-n)^2 \\
		&&& \begin{bmatrix}
		1 & \hat{\bm{x}}^{\text{T}}\\
		\hat{\bm{x}} & \bm{X}
		\end{bmatrix} \succeq 0.
		\end{aligned}
		\end{equation}

		The original problem in \eqref{prob:original} is hard, we will only obtain a surrogate solution in polynomial time. We thus adopt Algorithm \ref{algo} to obtain an approximate solution from SDP formulation. Let us denote this solution by $\hat{\bm{x}}^*$ based on Algorithm \ref{algo}. The performance of the approximation from SDP is evaluated by the performance ratio $r$ which satisfies:
		\begin{equation}\label{ratioex}
		\E \frac{1}{4}(\hat{\bm{x}}^*+1)^{\text{T}}P_{\bm{Z}^{\perp}}(\hat{\bm{x}}^*+1)
		\geq rw^*.
		\end{equation}

		Here the randomness in $\hat{\bm{x}}^*$ is introduced by the random rounding procedure shown in Algorithm \ref{algo} and $w^*$ is the optimal value of the objective function shown in \eqref{prob:original}.  Performance ratio $r$ can be used to quantify how close the solution from Algorithm \ref{algo} is to the optimal solution by solving the original hard problem.

		There exists a fruitful line of work on the approximation algorithms using either greedy algorithm or LP/SDP relaxation for DSP problems~\cite{KortsarzPeleg12, KortsarzPeleg8, Asahiro3, FeigLangber6, FeigSeltser7, SrivastavWolf16,YeZhang19, SDP, DkP14}. Most recent research has improved the performance ratio to $O(n^{-\frac{1}{4}+\epsilon})$ with LP relaxation in~\cite{DkP14}. However, if $\frac{k}{n}$ is not decaying with $n$, i.e., a constant, then this ratio is not desirable since it is decreasing in $n$. The authors in~\cite{SDP} propose an improved performance ratio that is better than $\frac{k}{n}$ for a wide range of $\frac{k}{n}$. We will adopt the approximation procedure in~\cite{SDP} and argue that the performance ratio is valid in our case here as well. In the following, we will first present the general algorithm and then show that the performance ratio in~\cite{SDP} is still applicable in our case.

		The approximation procedure in~\cite{SDP} is presented in Algorithm \ref{algo}, including three main procedures:
		\begin{itemize}
			\item Solve SDP problem in \eqref{eq:SDP} (step 1). After this procedure we can obtain the optimal continuous solution for \eqref{eq:SDP} and the optimal value of the objective function is denote by $w^{SDP}$.

			\item Construct initial $S$, where $S$ represents the initial subgraph and is a set of indices (step 2 through step 4). The $\hat{x}_i$'s take value 1 such that $i \in S$ and the rest -1. Let us denote them by $\hat{\bm{x}}^{0}$. The value of the objective function in \eqref{eq:SDP} is written as $w(S) = \frac{1}{4}\sum_i\sum_j w_{ij}(1+\hat{x}_i^{0}+\hat{x}_j^{0}+\hat{x}_i^{0}\hat{x}_j^{0}) = \frac{1}{4}(\hat{\bm{x}}^{0}+1)^{\text{T}}P_{\bm{Z}^{\perp}}(\hat{\bm{x}}^{0}+1)$. Here $w({S})$ is the total weights of edges in the subgraph induced by ${S}$. At this point the cardinality of $S$ is not necessarily $k$.

			\item Resize $S$ to $\tilde{S}$ such that $\tilde{S}$ contains exactly $k$ vertices (step 5 through step 16). The final assignment of $\hat{x}_i$'s is that $\{\hat{x}_i = 1, \hat{x}_j = -1| i \in \tilde{S}, j \notin \tilde{S}\}$. Let us denote them by $\hat{\bm{x}}^{*}$. The value of the objective function is $w(\tilde{S}) = \frac{1}{4}(\hat{\bm{x}}^{*}+1)^{\text{T}}P_{\bm{Z}^{\perp}}(\hat{\bm{x}}^{*}+1)$ and is the total weights of the edges induced by $\tilde{S}$.
		\end{itemize}

		\begin{algorithm}
			\KwIn{$w_{ij}, k, n$}
			\KwOut{$S$}
			Solve SDP in \eqref{eq:SDP}, obtain $\bm{X}, \hat{\bm{x}}$.\\

			Construct $\bar{\bm{X}} =  \begin{bmatrix}
			1 & \hat{\bm{x}}^{\text{T}}\\
			\hat{\bm{x}} & \bm{X}
			\end{bmatrix}$.
			\\
			Construct covariance matrix $Y = \theta \bar{\bm{X}} + (1-\theta)\bm{P}$,
			where:
			\linebreak
			$0\leq\theta \leq 1$,
			\linebreak
			$\bm{P} = \begin{bmatrix}
			1 & \chi & \chi & \dots  & \chi \\
			\chi & 1 & \chi^2 & \dots  & \chi^2 \\
			\vdots & \vdots & \vdots & \ddots & \vdots \\
			\chi & \chi^2 & \dots & \chi^2  & 1
			\end{bmatrix}$,
			\linebreak
			$\chi = 2\frac{k}{n}-1$.
			\\
			Generate $\bm{u} \sim N(0,\bm{Y})$,
			$\tilde{\bm{x}} = sign(\bm{u})$, $S = \{i \geq 2:\tilde{x}_i = \tilde{x}_1\}$.\\

			Let $\tilde{S} = S$.\\
			\eIf{$|\tilde{S}| = k$}{Output $\tilde{S}$}{
				\eIf{$|\tilde{S}|<k$}{arbitrarily add $k-|\tilde{S}|$ nodes into $\tilde{S}$. Output $\tilde{S}$.}{
					\While{$|\tilde{S}|>k$}{
						\For{each $i \in \tilde{S}$ }{
							$\eta_i = \sum_{j \in \tilde{S}}{w_{ij}}$.
						}
						Rearrange $\tilde{S} = \{i_1, i_2, \dots, i_{|\tilde{S}|}\}$, where $\eta_{i_1}\geq\eta_{i_2}\geq \dots \geq \eta_{i_{|\tilde{S}|}}$. Remove $i_{|\tilde{S}|}$ and reset $\tilde{S} = \{i_1, i_2, \dots,i_{|\tilde{S}|-1}\} $.

					}
					Output $\tilde{S}$.
				}

			}

			\caption{Approximation algorithm with SDP relaxation in \eqref{eq:SDP}, adopted from~\cite{SDP}.}
			\label{algo}
		\end{algorithm}

		We now demonstrate that Algorithm \ref{algo} achieves the performance ratio shown in Proposition \ref{propositionratio}.

		\begin{proposition}\label{propositionratio}
			The performance ratio $r$ from Algorithm \ref{algo} defined as:
			\begin{equation}
			\E w(\tilde{S}) \geq r w^*
			\end{equation}
			satisfies the conditions presented in Proposition 2 in \cite{SDP} and is plotted in Fig. \ref{figratio}, where $w^*$ is the optimal value of the objective function in problem \eqref{prob:original}. When $k$ is large, this ratio is better than either O($n^{-(\frac{1}{3}-\epsilon)}$) or O($n^{-(\frac{1}{4}-\epsilon)}$) obtained by LP relaxation.
			\QEDA
		\end{proposition}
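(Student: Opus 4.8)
The plan is to verify that the maximization problem \eqref{prob:original} is an instance of the Dense-$k$-Subgraph problem handled in \cite{SDP}, so that the performance guarantee of Proposition 2 there transfers essentially verbatim; the only structural feature that differs from the classical setting is that our weight matrix is the projection $P_{\bm{Z}^{\perp}}$, whose off-diagonal entries $w_{ij}$ need not be non-negative. The first step is therefore to isolate which properties of the weight matrix the proof of Proposition 2 in \cite{SDP} actually uses. I expect these to be: (i) positive semidefiniteness of the weight matrix, which holds here since $P_{\bm{Z}^{\perp}}$ is an orthogonal projection and hence $P_{\bm{Z}^{\perp}}\succeq 0$; (ii) non-negativity of the objective $\tfrac14(\hat{\bm x}+\bm 1)^{\text T}P_{\bm{Z}^{\perp}}(\hat{\bm x}+\bm 1)\ge 0$ on the feasible set, which again follows from $P_{\bm{Z}^{\perp}}\succeq 0$; and (iii) the bookkeeping convention that the self-loop weight $w_{ii}$ equals half the $i$-th diagonal entry of $P_{\bm{Z}^{\perp}}$, which is exactly what makes $\tfrac14\sum_i\sum_j w_{ij}(1+\hat x_i+\hat x_j+X_{ij})$ reduce to $\tfrac14(\hat{\bm x}+\bm 1)^{\text T}P_{\bm{Z}^{\perp}}(\hat{\bm x}+\bm 1)$ when $X_{ij}=\hat x_i\hat x_j$.

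Second, I would check that \eqref{eq:SDP} is a genuine relaxation of \eqref{prob:original}: for any binary $\hat{\bm x}$ with $\sum_i\hat x_i = 2k-n$, the rank-one lift $\bm X=\hat{\bm x}\hat{\bm x}^{\text T}$ satisfies $X_{ii}=1$, $\sum_{i,j}X_{ij}=(2k-n)^2$, and the $2\times 2$-block positive semidefiniteness constraint, so $w^{SDP}\ge w^*$. Consequently any bound of the form $\E w(\tilde S)\ge r\,w^{SDP}$ established in \cite{SDP} immediately yields $\E w(\tilde S)\ge r\,w^*$, which is the claimed inequality of Proposition \ref{propositionratio}. Third, I would confirm that the rounding in Algorithm \ref{algo} is literally the procedure analyzed in \cite{SDP}: forming the mixture covariance $Y=\theta\bar{\bm X}+(1-\theta)\bm P$ with the prior $\bm P$ built from $\chi=2\tfrac kn-1$ (one checks $\bm P\succeq 0$ for $|\chi|\le 1$, e.g. by exhibiting it as the covariance of $u_i=\chi u_1+\sqrt{1-\chi^2}\,v_i$), sampling $\bm u\sim N(0,Y)$, taking coordinate-wise signs, and then resizing the resulting set to cardinality exactly $k$ by iteratively deleting the lowest weighted-degree vertex. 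Since the sign step sees only $Y$, the identity $\E[\mathrm{sign}(u_i)\mathrm{sign}(u_j)]=\tfrac2\pi\arcsin(Y_{ij})$ and all downstream estimates of \cite{SDP} apply once $P_{\bm{Z}^{\perp}}$ is PSD; the parameters $\theta$ and $\bm P$ are chosen there precisely to control simultaneously the expected induced edge weight and the expected cardinality of the initial set $S$, and the resizing loop is shown in \cite{SDP} to cost at most a bounded multiplicative factor.

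Assembling these facts gives the stated $r$, whose explicit dependence on $k/n$ is exactly the curve of Proposition 2 in \cite{SDP} drawn in Fig. \ref{figratio}; in particular, for $k/n$ bounded away from $0$ (the range $(0.2,0.9995)$ quoted in the statement) $r$ is a positive constant, which dominates the vanishing LP-relaxation ratios $O(n^{-(1/3-\epsilon)})$ and $O(n^{-(1/4-\epsilon)})$.

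The step I expect to be the main obstacle is the first one: ensuring that no estimate in the proof of Proposition 2 of \cite{SDP} secretly relies on entrywise non-negativity of $w_{ij}$ rather than on $P_{\bm{Z}^{\perp}}\succeq 0$, and that the ``half on the diagonal'' self-loop convention is consistent with their accounting. Should some step genuinely need non-negativity, the remedy is to write $P_{\bm{Z}^{\perp}}=I-\bm Z(\bm Z^{\text T}\bm Z)^{-1}\bm Z^{\text T}$ and treat the two pieces separately, or to add a large enough multiple of $\bm 1\bm 1^{\text T}$ to the weight matrix; because $\bm 1^{\text T}\hat{\bm x}=2k-n$ is fixed on the feasible set, such a shift changes the objective only by an additive constant and leaves the optimizer, and hence the performance ratio, unchanged, while making the relevant matrix entrywise non-negative.
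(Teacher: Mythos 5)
Your proposal correctly sets up the reduction — checking that \eqref{eq:SDP} relaxes \eqref{prob:original} (so $w^{SDP}\ge w^*$), that the rounding in Algorithm \ref{algo} is literally the procedure of \cite{SDP}, and that the self-loop convention makes the objectives match — but it stops exactly where the paper's proof actually begins. The substantive obstacle is not positive semidefiniteness of $P_{\bm{Z}^{\perp}}$: it is that the graph here has nontrivial self edges ($w_{ii}\neq 0$, half the diagonal of $P_{\bm{Z}^{\perp}}$), and the one estimate in \cite{SDP} whose proof uses $w_{ii}=0$ is the resizing guarantee $w(\tilde{S})\ge \xi\, w(S)$ with $\xi=\frac{k(k-1)}{|S|(|S|-1)}$ when $|S|>k$ (the paper argues the rounding constant $\alpha$ in \eqref{bounda} is insensitive to self edges, so that part transfers directly). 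The paper closes this gap with two lemmas you leave open: Lemma \ref{wii}, showing the diagonal entries of $P_{\bm{Z}^{\perp}}$ are non-negative (idempotence gives $v_{ii}=v_{ii}^2+\sum_{j\ne i}v_{ij}^2$, so $0\le v_{ii}\le 1$), and Lemma \ref{selfedge}, a redo of the greedy-deletion averaging argument in which self edges are counted once rather than twice, giving $w(\tilde{S}\setminus\{v\})\ge \frac{|\tilde{S}|-2}{|\tilde{S}|}w(\tilde{S})+\frac{1}{|\tilde{S}|}\sum_{i\in\tilde{S}}w_{ii}\ge \frac{|\tilde{S}|-2}{|\tilde{S}|}w(\tilde{S})$ and, by induction, the same $\xi$ as in the self-edge-free case. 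You explicitly flag this verification as ``the main obstacle'' but do not carry it out, so as written the argument is incomplete precisely on the point the proposition requires.

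Your proposed fallback does not repair this. Adding a large multiple of $\bm{1}\bm{1}^{\text{T}}$ (or any constant to every $w_{ij}$) shifts the objective by a constant $C>0$ on the feasible set, so it preserves the optimizer but not a multiplicative guarantee: from $\E\,[w(\tilde{S})+C]\ge r\,(w^*+C)$ one only gets $\E\, w(\tilde{S})\ge r\,w^*-(1-r)C$, which is weaker than the claimed $\E\, w(\tilde{S})\ge r\,w^*$; moreover the shift leaves the (positive) self edges in place, so it does not address the step that actually needs attention. What is needed, and what the paper supplies, is the direct use of $w_{ii}\ge 0$ inside the resizing analysis.
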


		\begin{figure}[!t]
			\centering
			\includegraphics[width = 0.8\columnwidth]{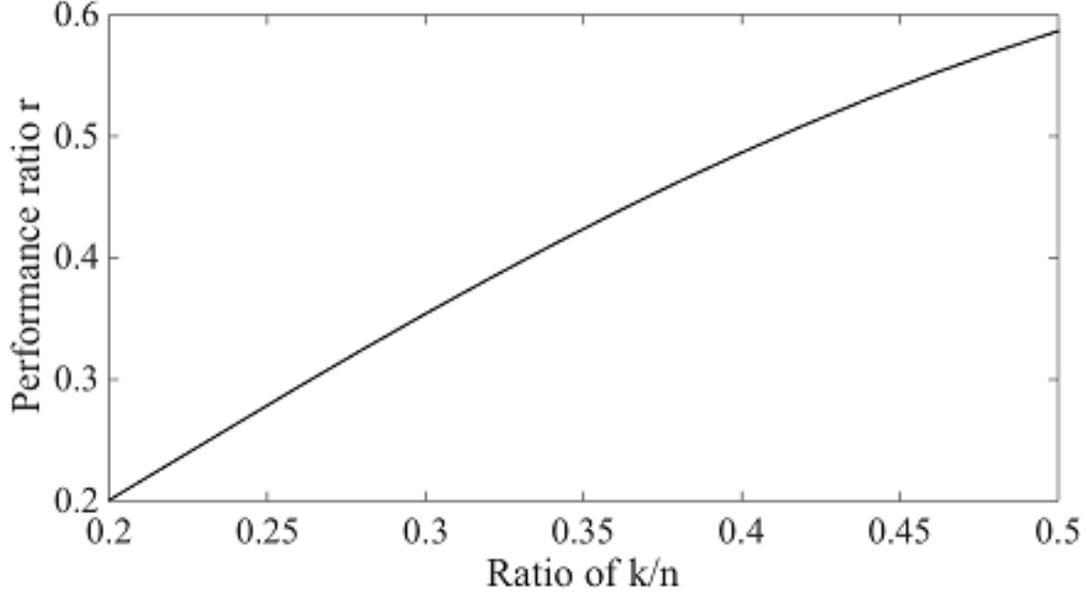}
			\caption{ Performance ratio from Algorithm \ref{algo} for $\frac{k}{n} \in [0.2, 0.5]$. }
			\label{figratio}
		\end{figure}

		As can be seen from Proposition \ref{propositionratio}, the performance ratio $r$ quantifies the gap between the approximated solution obtained by Algorithm \ref{algo} and the optimal solution from the original problem shown in \eqref{prob:original}. This ratio $r$ is the direct result from the random rounding procedure and the resizing procedure shown in Algorithm \ref{algo}. The ratios associated with these two procedures are presented in \eqref{bounda} and \eqref{boundb}:

		\begin{equation}\label{bounda}
		\E w(S) \geq \alpha w^*,
		\end{equation}
		and
		\begin{equation}\label{boundb}
		w(\tilde{S}) \geq \xi w(S).
		\end{equation}

		In \cite{SDP}, the authors well define the parameter $\alpha$ (which depends on $k$ and $n$) and $\xi$ (depends on $k$ and $S$ obtained from the random rounding procedure) so that the performance ratio $r$ satisfies Proposition \ref{propositionratio} where there are no self edges, i.e., $w_{ii} = 0$. However, in our case, the $w_{ii}'s $ are the diagonal elements of $\bm{P}_{\bm{Z}^{\perp}}$ and they are not necessarily zero, so the graph in our case contains non trivial self edges. If we show that the presence of self edges does not change the values of $\alpha$ and $\xi$, then Proposition \ref{propositionratio} naturally holds in our case as well.

		Let us first discuss \eqref{bounda}. From \cite{SDP}, parameter $\alpha$ does not depend on whether there are self edges in the graph, so \eqref{bounda} directly applies.

		Next we need to check if the same $\xi$ applies at the presence of self edges. When there are no self edges, i.e., $w_{ii} = 0$, the authors in \cite{SDP} show that $\xi = \frac{k(k-1)}{|S|(|S|-1)}$ when $|S|>k$ and $\xi = 1$ otherwise. We show that the same condition for $\xi$ holds even with the presence of non negative self edges, as stated in Lemma \ref{selfedge}.

		\begin{lemma}\label{selfedge}
			Let $S$ and $\tilde{S}$ be obtained from random rounding procedure and resizing procedure in Algorithm \ref{algo} from a graph with non negative self edges, i.e., $w_{ii} \geq 0$. Then we have:
			\[
			w(\tilde{S})=
			\begin{cases}
			\frac{k(k-1)}{|S|(|S|-1)}w(S),& \text{if } |S| > k\\
			1,              & \text{otherwise}
			\end{cases}
			\]
			\QEDA
		\end{lemma}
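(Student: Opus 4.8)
\emph{Proof proposal.} The plan is to re-run the analysis of the resizing loop of Algorithm~\ref{algo} exactly as in~\cite{SDP} and to isolate the single place where the assumption that the graph has no self edges was used, checking that non-negativity of the self edges is all the argument needs there. For a vertex set $T$ let $\bm{1}_T$ denote its $0/1$ indicator and set $w(T)=\bm{1}_T^{\text{T}}P_{\bm{Z}^{\perp}}\bm{1}_T$; this agrees with the quantities $w(S)$ and $w(\tilde{S})$ in Algorithm~\ref{algo}, since there $\hat{\bm{x}}+\bm{1}=2\bm{1}_{(\cdot)}$ and hence $\tfrac14(\hat{\bm{x}}+\bm{1})^{\text{T}}P_{\bm{Z}^{\perp}}(\hat{\bm{x}}+\bm{1})=w(\cdot)$. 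Two elementary facts will be used. First, $P_{\bm{Z}^{\perp}}$ is an orthogonal projector, so $0\preceq P_{\bm{Z}^{\perp}}\preceq I$; consequently $w(T)\in[0,|T|]$ and every self edge $w_{ii}=\tfrac12[P_{\bm{Z}^{\perp}}]_{ii}$ lies in $[0,\tfrac12]$, so the hypothesis $w_{ii}\ge 0$ holds automatically in our application. Second, removing a single vertex $v$ from $T$ changes the induced weight by exactly
\[
w\bigl(T\setminus\{v\}\bigr)=w(T)-2\eta_v,\qquad \eta_v=\sum_{j\in T}w_{vj}=(P_{\bm{Z}^{\perp}}\bm{1}_T)_v-\tfrac12[P_{\bm{Z}^{\perp}}]_{vv},
\]
where $\eta_v$ is the weighted degree used in Algorithm~\ref{algo}.

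The substantive case is $|S|>k$, the shrinking phase. Summing the degrees over the current set $T$ gives
\[
\sum_{v\in T}2\eta_v \;=\; 2w(T)-\sum_{v\in T}[P_{\bm{Z}^{\perp}}]_{vv} \;=\; 2w(T)-2\sum_{v\in T}w_{vv}\;\le\;2w(T),
\]
and this is the only point where the self edges enter: in the self-edge-free model of~\cite{SDP} the relation is an equality, whereas here a non-negative correction is subtracted, which only helps. Therefore the vertex $v^{*}$ of smallest weighted degree in $T$ satisfies $2\eta_{v^{*}}\le 2w(T)/|T|$, so removing it leaves $w\bigl(T\setminus\{v^{*}\}\bigr)\ge w(T)\bigl(1-2/|T|\bigr)$. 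Iterating as $|T|$ runs down from $|S|$ to $k$ telescopes to
\[
w(\tilde{S})\;\ge\;w(S)\prod_{t=k+1}^{|S|}\frac{t-2}{t}\;=\;\frac{k(k-1)}{|S|(|S|-1)}\,w(S),
\]
which is the first branch of the lemma, with the factor identical to~\cite{SDP}.

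The remaining case $|S|\le k$, in which the loop only adjoins $k-|S|$ further vertices and the claim is $\xi=1$ (i.e.\ $w(\tilde{S})\ge w(S)$), is taken from the corresponding step of~\cite{SDP}; the non-negativity of the self edges is the only structural feature used there, and it holds by the first fact above. Combining the two branches gives $w(\tilde{S})\ge\xi\,w(S)$ with $\xi$ as stated.

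I expect the only delicate point to be the pigeonhole step in the shrinking phase, i.e.\ confirming that the average weighted degree $\tfrac{1}{|T|}\sum_{v\in T}\eta_v$ is still at most $w(T)/|T|$ once the diagonal terms are included; this is exactly what the identity $\sum_{v\in T}2\eta_v=2w(T)-2\sum_{v\in T}w_{vv}$ together with $w_{vv}\ge 0$ supplies. After that, every step — the telescoping and the padding case — reduces essentially word-for-word to~\cite{SDP}, so I anticipate no further obstacle.
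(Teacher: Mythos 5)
Your handling of the substantive case $|S|>k$ is correct and essentially identical to the paper's: you remove the vertex of minimum weighted degree, use the identity $\sum_{v\in T}2\eta_v = 2w(T)-2\sum_{v\in T}w_{vv}$, discard the non-negative diagonal correction (the paper isolates exactly this fact as Lemma \ref{wii}; your projector observation reproduces it), and telescope the factors $\frac{|T|-2}{|T|}$ down to $\frac{k(k-1)}{|S|(|S|-1)}$. Your bookkeeping with $w(T)=\bm{1}_T^{\text{T}}P_{\bm{Z}^{\perp}}\bm{1}_T$ differs from the paper's edge-sum convention only by a uniform factor and is, if anything, more cleanly matched to the quadratic objective in Algorithm \ref{algo}.

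The gap is in the padding case $|S|\le k$. You assert that the corresponding step of \cite{SDP} uses only non-negativity of the self edges; it does not. When a vertex $v$ is adjoined to the current set $T$, the weight changes by $2\eta_v=2\sum_{j\in T\cup\{v\}}w_{vj}$, which involves the off-diagonal entries $[P_{\bm{Z}^{\perp}}]_{vj}$, and these are not non-negative in general. Indeed, because $\bm{Z}$ contains the all-ones column, Lemma \ref{lemma0} gives $P_{\bm{Z}^{\perp}}\bm{1}=\bm{0}$, so every row of $P_{\bm{Z}^{\perp}}$ sums to zero and must contain negative off-diagonal entries; adding an arbitrary vertex can therefore decrease the induced weight, and $w(\tilde{S})\ge w(S)$ does not follow from $w_{ii}\ge 0$ alone. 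The paper's own proof covers this case by first adding a constant to every $w_{ij}$ so that all edge weights are non-negative before invoking the ``adding vertices only helps'' argument (the diagonal non-negativity of Lemma \ref{wii} is only what rescues the shrinking case). To close your argument you need either that preprocessing step, together with a remark on why the shift is admissible, or a separate bound on the possible loss when padding with vertices whose cross edges may carry negative weight.
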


	Proof of Lemma \ref{selfedge} is given in the appendix. Lemma \ref{selfedge} validates the $\alpha$ and $\xi$ with the presence of non negative self edges, thus the performance ratio $r$ stated in Proposition \ref{propositionratio} is valid. A list of performance ratio $r$'s for different values of $\frac{k}{n}$ is shown in~\cite{SDP} and is plotted in Fig. \ref{figratio}. This rate is satisfactory since it has a rate of $O(\frac{k}{n})$ but strictly larger than $\frac{k}{n}$ \cite{SDP}. It is better than O($n^{-\frac{1}{4}+\epsilon}$) when $\frac{k}{n}$ is not decaying faster than $O(n^{-\frac{1}{4}+\epsilon})$. In fact, as long as $\frac{k}{n}$ lies within some constant range, then $\frac{k}{n}$ can be seen as a varying constant thus is not decaying as a function of $n$.

	\section{SIMULATION}\label{simulation}
	In this section we show the comparison results between random assignment and optimal assignment. We simulate the covariates from two different distributions, i.e., Gaussian distribution and uniform distribution. We also validate the claim by simulated building data from \cite{sim}.

	\subsection{Gaussian Ensemble}

	We first generate the covariates as they are drawn from i.i.d. Gaussian ensemble, i.e., $N(\bm{0}, {I})$. We compare two cases where $n = 3k$ and $n = 5k$ in Fig. \ref{Gaussian}. Note that Fig. \ref{Gaussian} is shown in semilogarithmic plot where the $y$-axis has a logarithmic scale and the $x$-axis has a linear scale. In addition, we adopt the value of $\theta$ in \cite{SDP}, i.e., 0.9 for $n = 3k$ and 0.94 for $n = 5k$. We let $p = n -1$ to obtain the worst case performance.

	Besides SDP relaxation, we also simulate a greedy based assignment to maximize the weighted edges induced by $k$ vertices. The greedy assignment sequentially eliminates vertices and works as follows: we start with the original graph and a set containing all vertex. At each elimination step, the vertex with the least weighted edges are eliminated from the set until this set contains exactly $k$ vertices. This greedy algorithm is introduced in \cite{Asahiro3}.

	In addition, we use the result from branch and bound (upper bound) to serve as the reference in order to compare the random assignment and the proposed optimal assignment. The duality gap for branch and bound is set to be 0.05 for all $n$ when $n = 3k$. Due to computational complexity and time constraints, we set this gap to be around 0.25 when $n$ is big in the case of $n = 5k$.

	In Fig. \ref{Gaussian}, we see that the semilog plot on $\Var^{-1} \hat{\beta}$ ($\bm{x}^{\text{T}}P_{\bm{Z}_{n,n-1}^{\perp}}\bm{x}$) is growing with $n$ and similar to $\log n$. This suggests that $\Var^{-1} \hat{\beta}$ is linear in $n$, as we stated in Section \ref{optimal}. In addition, It is very close to the solution obtained by branch and bound, meaning that the result from SDP relaxation is close to the optimal solution in \eqref{prob:original}. The empirical performance ratio from SDP relaxation is shown in Table I.

	From Table I, we see that the performance ratio for $n$ between $10$ and $200$ is actually greater than $\frac{k}{n}$, which is even better than the theoretical bound in Proposition \ref{propositionratio}. In addition, if we decrease $k$ with respect to $n$, i.e., change $k$ from $\frac{n}{3}$ to $\frac{n}{5}$, then the performance ratio is reduced. This is due to the fact we need to do more eliminations during the resizing procedure in Algorithm \ref{algo} and the deterioration increases. However, if $\frac{k}{n}$ is well defined within a range, then this deterioration is controlled and does not change the statement in Proposition \ref{propositionratio}.

	On the other hand, the semilog plot on $\Var^{-1} \hat{\beta}$ from random assignment is a constant on average across different values of $n$, whether $k$ is small or large. This validates Theorem \ref{theorem1} as it states that the rate is $\Theta$($1$) when $p = n-1$. In this case, we cannot obtain an efficient estimator since the variance is not decaying even when $n$ is big.

	What is more, the greedy assignment does not provide a relatively good performance as well. It yields a constant $\Var \hat{\beta}$ as random assignment. This suggests although both the greedy algorithm and SDP relaxation are aimed to solve an optimization problem, the solution from SDP relaxation is much better and reliable.

	\begin{figure}[!t]
		\centering
		\includegraphics[width = 0.8\columnwidth]{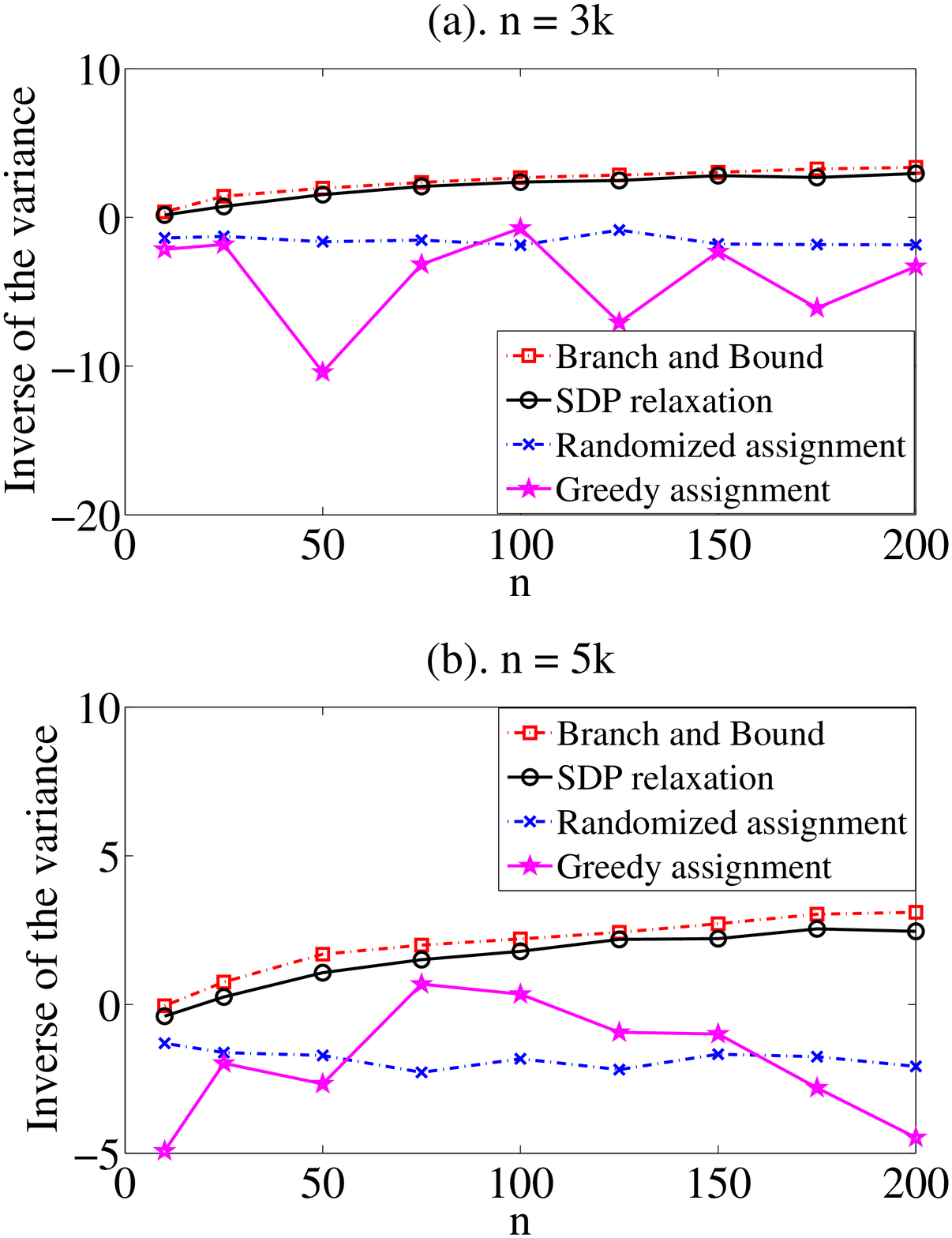}
		\caption{ Semilogarithmic plot on the inverse of $\Var \hat{\beta}$, i.e., $\bm{x}^{\text{T}}P_{\bm{Z}_{n,n-1}^{\perp}}\bm{x}$,  assuming Gaussian distribution of $\bm{Z}_{1:n-2}$. The $y$-axis is shown in a logarithmic scale and the $x$-axis is shown in a linear scale. Upper plot shows the rate when $n = 3k$, lower plot shows the corresponding rate when $n = 5k$. }
		\label{Gaussian}
	\end{figure}

	\begin{table}[!ht]\label{table1}
		\renewcommand{\arraystretch}{1.3}
		\caption{Empirical performance ratio for Gaussian ensemble with different values for $\frac{k}{n}$ and varying $n$.} 
		\centering
		\begin{tabular}{|c|c|c|c|c|}
			\hline
			\bfseries $\rho = \frac{k}{n}$ & \bfseries $n$ = 10 & \bfseries  $n$ = 50 & \bfseries  $n$ = 100 & \bfseries  $n$ = 200 \\
			\hline
			$\frac{1}{3}$  & 0.8131 & 0.6486  &  0.7416 & 0.6588 \\
			\hline
			$\frac{1}{5}$  & 0.7026 & 0.5362 & 0. 6577 & 0.5273 \\
			\hline
		\end{tabular}
	\end{table}

	\subsection{Uniform Ensemble}
	Although we discuss the rate of quantity $\bm{x}^{\text{T}}P_{\bm{Z}_{n,n-1}^{\perp}}\bm{x}$ with respect to Gaussian ensemble, we also simulate the covariates where the elements are drawn from a uniform distribution in an interval [-1,1]. The results are shown in Fig. \ref{Uniform}. Fig. \ref{Uniform} is again a semilogarithmic graph. We again take $\theta$ to be 0.9 when $n = 3k$ and 0.94 when $n = 5k$. The duality gap is 0.05 when $n = 3k$. When $n = 5k$, the duality gap is around 0.1 to 0.2 for $n$ greater than 100 and is 0.25 for $n = 200$. The comparison result is shown in Fig. \ref{Uniform} and the performance ratio by SDP relaxation is shown in Table II.

	\begin{figure}[!t]
		\centering
		\includegraphics[width = 0.8\columnwidth]{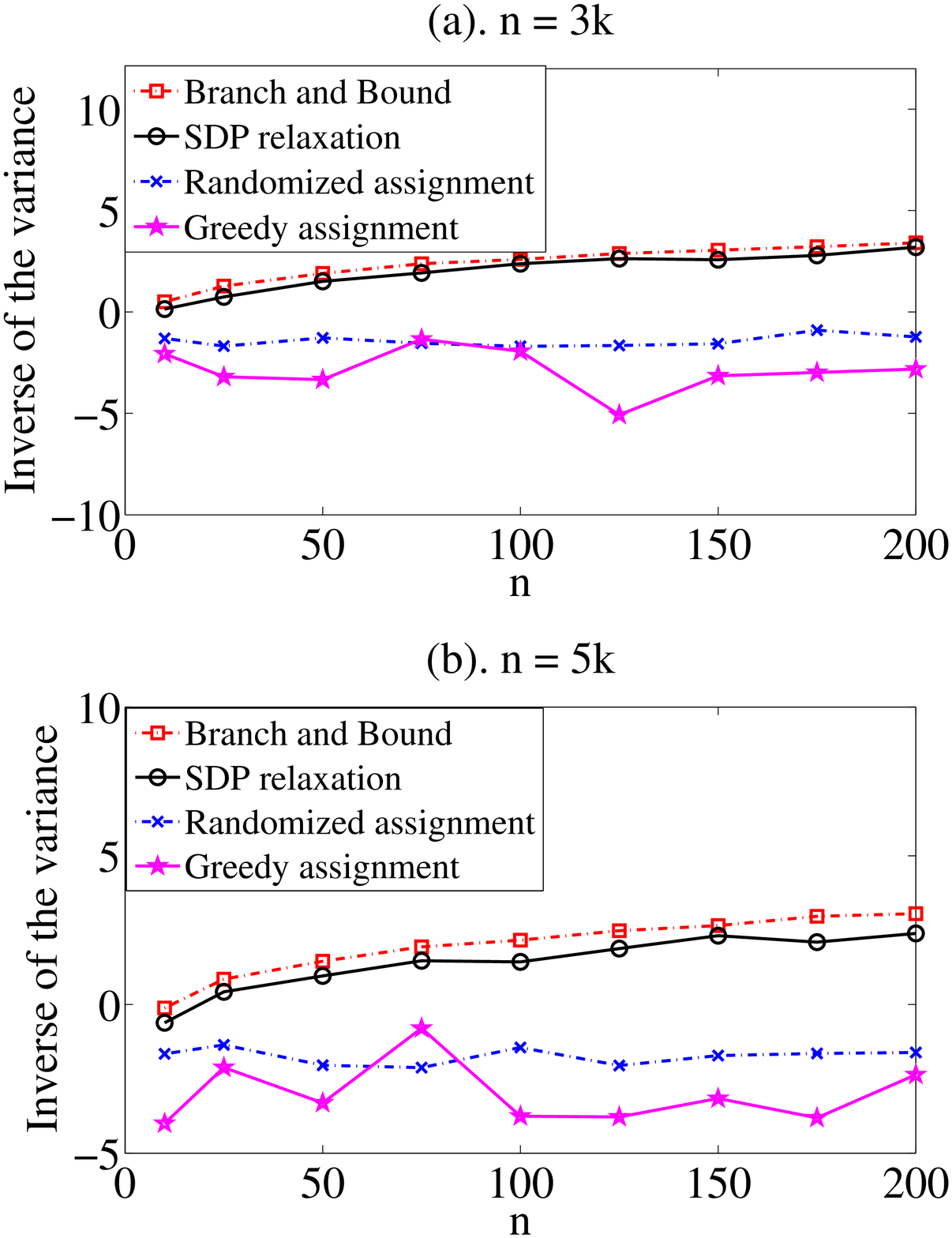}
		\caption{Semilogarithmic plot on the inverse of $\Var \hat{\beta}$, i.e., $\bm{x}^{\text{T}}P_{\bm{Z}_{n,n-1}^{\perp}}\bm{x}$,  assuming uniform distribution of $\bm{Z}_{1:n-2}$. Upper plot shows the rate when $n = 3k$, lower plot shows the corresponding rate when $n = 5k$.}
		\label{Uniform}
	\end{figure}

	\begin{table}[!ht]\label{table2}
		\renewcommand{\arraystretch}{1.3}
		\caption{Empirical performance ratio for uniform ensemble with different values for $\frac{k}{n}$ and varying $n$.} 
		\centering
		\begin{tabular}{|c|c|c|c|c|}
			\hline
			\bfseries $\rho = \frac{k}{n}$ & \bfseries $n$ = 10 & \bfseries  $n$ = 50 & \bfseries  $n$ = 100 & \bfseries  $n$ = 200 \\
			\hline
			$\frac{1}{3}$  & 0.6961 & 0.6755  &  0.8068 & 0.8094 \\
			\hline
			$\frac{1}{5}$  & 0.6101 & 0.6113 & 0.4799 & 0.5145 \\
			\hline
		\end{tabular}
	\end{table}

	The observation from Fig. \ref{Uniform} and Table II is similar to the analysis in the case of Gaussian ensemble, that the solution obtained from SDP relaxation is still within a constant of the branch and bound solution for $10 \leq n \leq 200$. Again, greedy algorithm fails to find a solution close to that by branch and bound and performs as poorly as the random assignment. The performance ratio is again decreased when we decrease $k$, which indicates the similar deterioration occurred during the resizing procedure.

	\subsection{Building data}

	We also validate the claim in the paper by simulated building data obtained from \cite{sim}. Using this software, we can generate building covariates such as environment temperature, number of occupants, appliances scheduling, etc. The software outputs the energy consumption based on the covariates that we generate. The buildings vary from a small postal office, to a large commercial hotel, with different number of covariates involved in the modeling.

	For the purpose of this paper, we include significant covariates into the linear regression model and the number of those covariates are comparable to the number of users in the simulation. The values of the covariates are whitened by the covariance matrix so that the covariates have zero mean and unit variance \cite{CarlosEtAl2016}. The details of this procedure is given in the appendix. In addition, we fix $k = \frac{1}{3}n$ in the simulation.

	The simulation results are shown in Fig. \ref{building} and Fig. \ref{varying_d}.

	In Fig. \ref{building}, we observe that $\Var \hat{\beta}$ is decaying very fast with the optimal assignment strategy, whereas the variance stays unchanged if adopting a random assignment strategy.

	\begin{figure}[!t]
		\centering
		\includegraphics[width = 0.75\columnwidth]{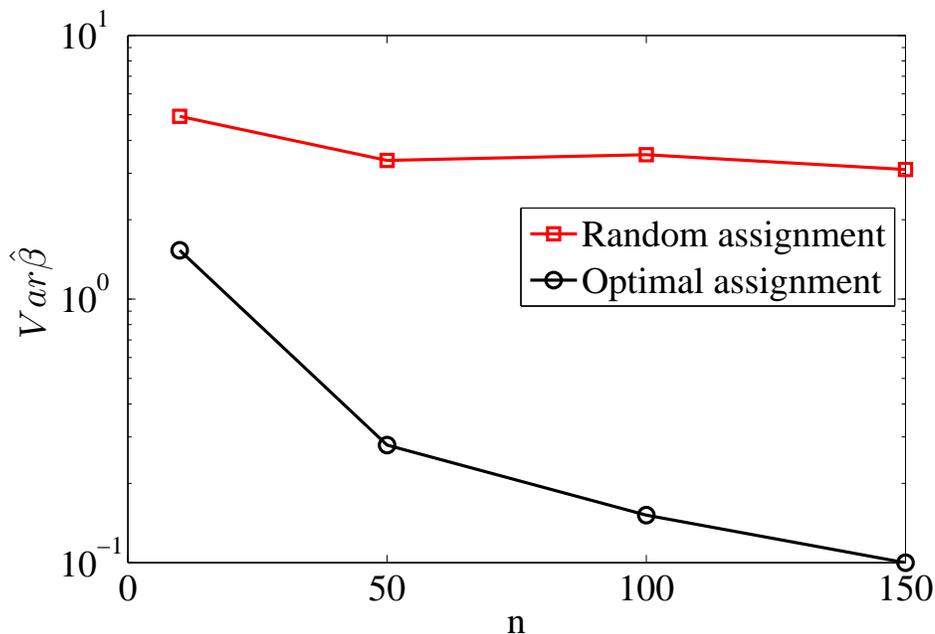}
		\caption{Variance of $\hat{\beta}$ as a function of $n$ when $p = n-1$, simulated from building data, with two different assignment strategies.}
		\label{building}
	\end{figure}

	The variance $\Var \hat{\beta}$ is a varying quantity in terms of the dimension of the covariates, i.e., $p-1$. Assume that the number of users are fixed, i.e., $n$ is fixed, and that $\rho$ is a constant. According to Theorem \ref{theorem1} and in Theorem \ref{theorem2}, with random assignment the variance decays with $n-p$ whereas with optimal assignment the worse case scenario with $p = n - 1$ yields a decay rate of $n$.

	In Fig. \ref{varying_d}, we show how $\Var \hat{\beta}$ varies with an increasing $p$ and a fixed $n = 50$. As can be seen from Fig. \ref{varying_d}, the increasing $p$ deteriorates the variance much severely from random assignment than that from the proposed optimal assignment. This again validates the efficiency of the proposed strategy in improving estimation accuracy, especially in a high dimensional setting where $p$ is comparable to $n$.

		\begin{figure}[!t]
			\centering
			\includegraphics[width = 0.7\columnwidth]{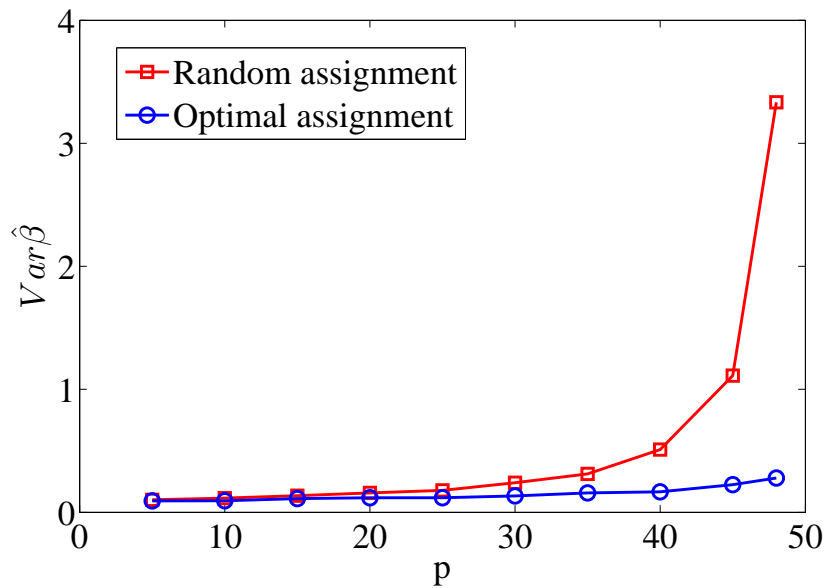}
			\caption{Variance of $\hat{\beta}$ as a function of $p$ when $n$ is fixed, simulated from building data, with two different assignment strategies.}
			\label{varying_d}
		\end{figure}

	\section{CONCLUSION}\label{conclusion}
	In this paper, we estimate the average treatment effect of demand response (DR) signals. We adopt an additive linear regression model and discuss two different strategies to assign DR signals to users under limited assignment budgets. The first strategy randomly picks $k$ users and sends DR signals to them. The second strategy optimally assigns DR signals to $k$ users by minimizing the variance to estimate treatment effect. We show that in a high dimensional setting, the second strategy achieves order optimal rates in variance reduction, whereas random assignment does not reduce variance even as the number of users grows. We formulate the general assignment as a combinatorial optimization problem and present a tractable SDP relaxation. We show that this relaxation obtains a solution that is within bounded gap of the original optimal solution. The simulation results validate this proposition with the synthetic data on both i.i.d. Gaussian covariates and uniform covariates. This work provides a framework for further research in applying causal inference in analyzing consumption data and DR interventions.


	\bibliographystyle{IEEEtran}	
	\bibliography{ref}


	\section*{Appendix}
	\subsection{Proof of Lemma \ref{Gaussiannull}}
	\begin{proof} 
		This proof follows the intuition in \cite{nullspace}. In \cite{nullspace}, to prove the null space of a Gaussian random matrix, the authors use the fact that a standard multivariate Gaussian distribution is invariant to any orthogonal transform. The difference here is that the matrix $\bm{Z}$ contains an extra column of 1's as intercept.

		Denote the null space of $\bm{Z}_{n,n-1}^{\text{T}}$ by $\bm{y}$. It is one dimensional and satisfies:
		\begin{equation}\label{eq:null}
		\begin{bmatrix}
		\bm{Z}_{1:n-2}^{\text{T}}\\
		\bm{1}^{\text{T}}
		\end{bmatrix} \bm{y} = \bm{0}.
		\end{equation}

		Now we multiply $\bm{y}$ by an orthogonal matrix $\bm{U}_{n,n}^{\text{T}}$ on the left hand side and multiply $\bm{Z}_{n,n-1}^{\text{T}}$ by $\bm{U}_{n,n}$ on the right hand side. For simplicity let us write $\bm{U}_{n,n}$ as $\bm{U}$. Since $\bm{U}\bm{U}^{\text{T}} = {I}$ then the following holds:
		\begin{equation}\label{eq:transform}
		\begin{bmatrix}
		\bm{Z}_{1:n-2}^{\text{T}}\\
		\bm{1}^{\text{T}}
		\end{bmatrix} \bm{U}\bm{U}^{\text{T}}\bm{y} =
		\begin{bmatrix}
		\bm{Z}_{1:n-2}^{\text{T}}\bm{U}\\
		\bm{1}^{\text{T}}\bm{U}
		\end{bmatrix} \bm{U}^{\text{T}}\bm{y}
		= \bm{0}.
		\end{equation}

		Because $\bm{Z}_{1:n-2}^{\text{T}}$ has i.i.d. standard Gaussian entries and $\bm{U}$ is an orthogonal matrix, then each row of $\bm{Z}_{1:n-2}^{\text{T}}\bm{U}$ must follow a Gaussian distribution $N(\bm{0}\bm{U}, \bm{U}{I}\bm{U}^{\text{T}})$, which is still a standard Gaussian distribution. Thus $\bm{Z}_{1:n-2}^{\text{T}}\bm{U}$ has i.i.d. standard Gaussian entries. Denote $\tilde{\bm{Z}} = \bm{Z}_{1:n-2}^{\text{T}}\bm{U}$, then $\tilde{\bm{Z}} $ and $\bm{Z}_{1:n-2}$ are drawn from the same distribution and each of their entries follows i.i.d. standard Gaussian distribution.

		From \eqref{eq:transform}, we also require that the orthogonal matrix $\bm{U}$ satisfies $\bm{1}^{\text{T}}\bm{U} = \bm{1}$. If such orthogonal matrix exists (which is easy to find), then we can rewrite \eqref{eq:transform} as:
		\begin{equation}\label{eq:newnull}
		\begin{bmatrix}
		\tilde{\bm{Z}}^{\text{T}}\\
		\bm{1}^{\text{T}}
		\end{bmatrix} \bm{U}^{\text{T}}\bm{y}
		= \bm{0}.
		\end{equation}

		Comparing \eqref{eq:newnull} and \eqref{eq:null}, we see that $\bm{U}^{\text{T}}\bm{y}$ and $\bm{y}$ must be identically distributed. One distribution that satisfies this property is the standard normal distribution $N(\bm{0}, {I})$. It is easy to show that identity matrix is the only covariance matrix that satisfies such condition whereas zero mean is based on the fact that $\bm{1}^{\text{T}}\bm{y} = 0$ from \eqref{eq:null}. This observation concludes the final proof.
	\end{proof}

	\subsection{Proof of Lemma \ref{lemma2}}

	\begin{proof}
		The proof is based on the fact that if in expectation, at least $k$ $y_i$'s is greater than a constant, then the $k$th largest $y_i$ must be greater than this constant \cite{MaxGaussian}. Mathematically speaking, we want $\E \{ |i:y_i \geq C \sqrt{\log{\frac{n}{k}}}, \forall i|\} \geq k$, where $C$ is a constant. We write $\E \{ |\{i:y_i \geq C \sqrt{\log{\frac{n}{k}}}\}|\}$ in the following form:
		\begin{equation}\label{lowerbound1}
		\begin{aligned}
		\E \{ |\{i:y_i \geq C \sqrt{\log{\frac{n}{k}}}\}|\}  & = n \text{Pr} \{y_i \geq C \sqrt{\log{\frac{n}{k}}}\} \\
		& = nQ(C \sqrt{\log{\frac{n}{k}}}).\\
		\end{aligned}
		\end{equation}

		$Q$ function of Gaussian distribution does not have a close form expression, so we are interested in a tight lower bound for it in order to obtain a lower bound for \eqref{lowerbound1}. Many lower bounds are obtained in literature \cite{05470020, 1202.6483, erf-approx,j64}. In particular, \cite{erf-approx} provides a lower bound that is valid when the argument is small, and \cite{05470020} provides a lower bound that is the tightest when the argument becomes relatively large. The lower bounds (bound 1 through bound 4, from \cite{05470020, 1202.6483, erf-approx,j64} respectively) are presented in \eqref{bound1} through \eqref{bound4}.
			\begin{subequations}
				\begin{align}
				\label{bound1}
				Q(x) &  \geq \frac{1}{12}e^{-x^2}+\frac{1}{\sqrt{2\pi}(x+1)}e^{-\frac{x^2}{2}},\\
				\label{bound2}
				Q(x)&\geq  \frac{e^{\frac{1}{\pi(\kappa-1)+2}}}{2\kappa}\sqrt{\frac{1}{\pi}(\kappa-1)(\pi(\kappa-1)+2)}e^{-\frac{\kappa x^2}{2}}, \kappa \geq 1,  \\
				\label{bound3}
				Q(x) & \geq \frac{1-\sqrt{1-e^{-\frac{2x^2}{\pi}}}}{2},\\
				\label{bound4}
				Q(x) & \geq \sqrt{\frac{e(\beta-1)}{2\pi\beta^2}}e^{-\frac{\beta x^2}{2}}, \beta \geq 1.
				\end{align}
			\end{subequations}

		Note that bound 3 (in \eqref{bound3}) is only valid when $x$ is small. A comparison of these four bounds are shown in Fig.\ref{lowerbound}. It is shown in semilogarithmic plot where $y$-axis is in a logarithmic scale and $x$-axis is in a linear scale.

		Assume that $C \sqrt{\log{\frac{n}{k}}}$ is small, i.e., when $\frac{k}{n} = \frac{1}{2} - \epsilon$ which is slightly smaller than $\frac{1}{2}$ but greater than $\frac{1}{4}$. In this range the lower bound provided in \eqref{bound3} is the tightest. Use this lower bound we obtain the constant $C$ that is universally applicable for that \eqref{lowerbound1} holds when $\frac{n}{k}$ is small. After some simple calculation, we obtain the constant $C = \sqrt{\frac{\pi \log{(1-4\epsilon^2)}}{2\log{(\frac{1}{2}-\epsilon)}}}$.

		Although \eqref{bound1} provides the tightest bound when $C \sqrt{\log{\frac{n}{k}}}$ gets bigger, this lower bound includes two exponential terms which complicate the calculation. Actually when $\frac{k}{n}\leq \frac{1}{4}$, we can obtain a fairly good but conservative constant $C$ for the lower bound provided in \eqref{bound4} with only one exponential term. The constant $C$ in this case when $\frac{n}{k}$ is relatively large is calculated as $\sqrt{1-\frac{\log{8\pi}-1}{2\log{4}}} \approx 0.445$.

		\begin{figure}[!t]
			\centering
			\includegraphics[width = 0.8\columnwidth]{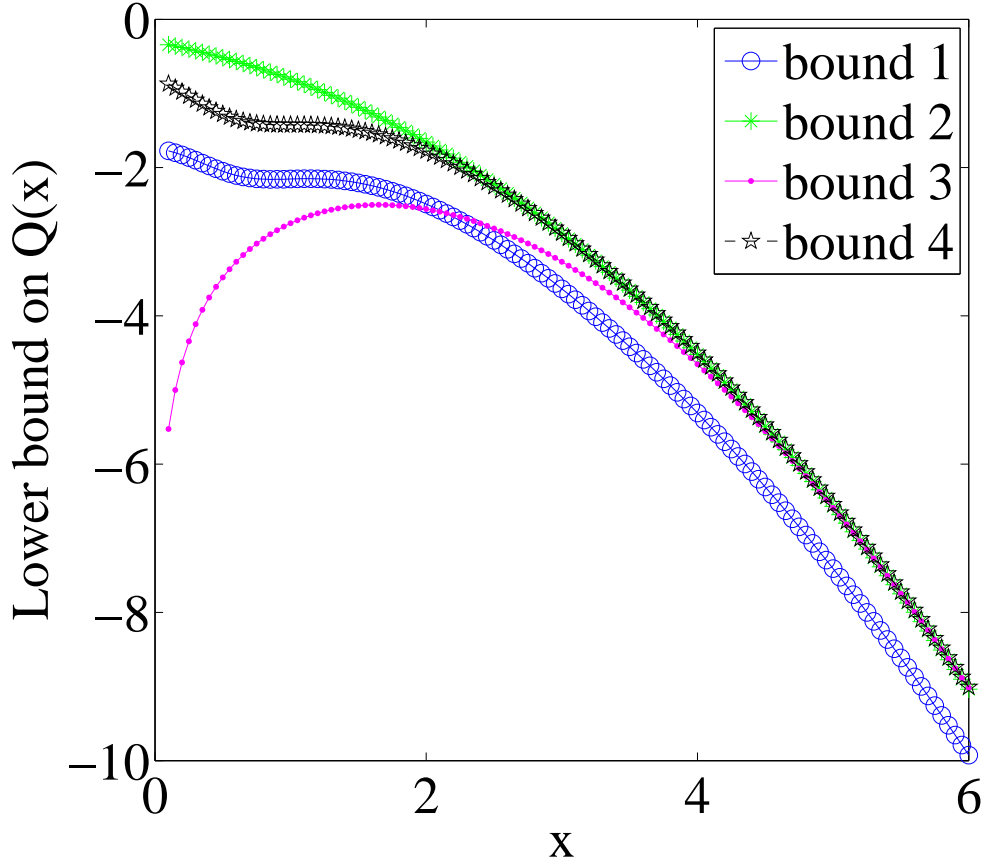}
			\caption{Semilogarithmic plot of the lower bounds discussed in \eqref{bound1} to \eqref{bound4}. The lower bounds are annotated in the graph as bound 1 to bound 4, respectively.}
			\label{lowerbound}
		\end{figure}

	\end{proof}

	\subsection{Proof of Lemma \ref{selfedge}}
	Before proving Lemma \ref{selfedge}, let us first introduce Lemma \ref{wii} that bounds the values of the diagonals of $P_{z^{\perp}}$.
	\begin{lemma}\label{wii}
		The diagonals of $P_{z^{\perp}}$ is non negative.
	\end{lemma}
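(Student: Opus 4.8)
The plan is to recall that $P_{\bm{Z}^{\perp}} = I - \bm{Z}(\bm{Z}^{\text{T}}\bm{Z})^{-1}\bm{Z}^{\text{T}}$ is an orthogonal projection matrix, and exploit the two defining algebraic properties of such a matrix: symmetry, $P_{\bm{Z}^{\perp}} = P_{\bm{Z}^{\perp}\text{T}}$, and idempotence, $P_{\bm{Z}^{\perp}}^2 = P_{\bm{Z}^{\perp}}$. The $i$'th diagonal entry can then be written as
\begin{equation}
(P_{\bm{Z}^{\perp}})_{ii} = (P_{\bm{Z}^{\perp}}^2)_{ii} = \sum_{j=1}^{n} (P_{\bm{Z}^{\perp}})_{ij}(P_{\bm{Z}^{\perp}})_{ji} = \sum_{j=1}^{n} (P_{\bm{Z}^{\perp}})_{ij}^2 \geq 0,
\end{equation}
where the middle equality uses symmetry to turn the product into a square. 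This immediately gives non-negativity of every diagonal element, which is exactly the claim.

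First I would state explicitly that $\bm{Z}$ has full column rank $p < n$ with probability one, so that $(\bm{Z}^{\text{T}}\bm{Z})^{-1}$ exists and $P_{\bm{Z}^{\perp}}$ is well defined; then I would verify symmetry and idempotence by direct substitution (both are one-line computations: symmetry is clear since $(\bm{Z}^{\text{T}}\bm{Z})^{-1}$ is symmetric, and idempotence follows because $\bm{Z}^{\text{T}}\bm{Z}(\bm{Z}^{\text{T}}\bm{Z})^{-1}\bm{Z}^{\text{T}} = \bm{Z}^{\text{T}}$). After that the displayed chain of equalities closes the argument. An equivalent phrasing I might mention as a remark is that $(P_{\bm{Z}^{\perp}})_{ii} = \bm{e}_i^{\text{T}} P_{\bm{Z}^{\perp}} \bm{e}_i = \| P_{\bm{Z}^{\perp}} \bm{e}_i \|_2^2 \geq 0$, since $P_{\bm{Z}^{\perp}}$ projects $\bm{e}_i$ orthogonally and the squared norm of any vector is non-negative; this is really the same computation viewed geometrically.

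There is no serious obstacle here — the lemma is a standard fact about projection matrices. The only thing worth being careful about is the rank/invertibility caveat (which the paper has already invoked in the proof of Theorem \ref{theorem1}), and making sure the indexing in the sum over $j$ is presented cleanly. I would keep the proof to the three lines above.
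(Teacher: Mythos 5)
Your proof is correct and uses essentially the same idea as the paper: both arguments rest on symmetry plus idempotence of a projection matrix forcing each diagonal entry to be a sum of squares. The only cosmetic difference is that you apply this directly to $P_{\bm{Z}^{\perp}}$, whereas the paper applies it to the hat matrix $\bm{Z}(\bm{Z}^{\text{T}}\bm{Z})^{-1}\bm{Z}^{\text{T}}$ to get $0 \leq v_{ii} \leq 1$ and then concludes $w_{ii} = 1 - v_{ii} \geq 0$; your route is one step shorter and equally valid.
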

	\begin{proof}[Proof of Lemma \ref{wii}]
		Recall that $\bm{P}_{\bm{z}^{\perp}}  = \bm{I} -  \bm{Z}(\bm{Z}^{\text{T}}\bm{Z})^{-1}\bm{Z}^{\text{T}}$, and $\bm{Z}(\bm{Z}^{\text{T}}\bm{Z})^{-1}\bm{Z}^{\text{T}}$ is a projection matrix, then we have the following:


			\begin{equation}
			\begin{aligned}
			(\bm{Z}(\bm{Z}^{\text{T}}\bm{Z})^{-1}\bm{Z}^{\text{T}})^2 & = \bm{Z}(\bm{Z}^{\text{T}}\bm{Z})^{-1}\bm{Z}^{\text{T}}.\\
			\end{aligned}
			\end{equation}

Denote the diagonals of the $\bm{Z}(\bm{Z}^{\text{T}}\bm{Z})^{-1}\bm{Z}^{\text{T}}$ by $v_{ii}$, then we have $v_{ii} = \sum_j v_{ij}^2 = v_{ii}^2 + \sum_{j \neq i} v_{ij}^2$, which implies that $0 \leq v_{ii} \leq 1$.

		This suggests that the diagonals of $P_{z^{\perp}}$, denoted by $w_{ii}$, is non-negative.
	\end{proof}

	Now we proceed to prove Lemma \ref{selfedge}.

		\begin{proof} [Proof of Lemma \ref{selfedge}]

			To prove that Lemma \ref{selfedge} holds, we add up a constant to each $w_{ij}$ to make sure that there are no negative weights in the graph. This does not change the optimization problem since it only adds a constant to the objective function and the solution remains the same.

			If $|S| \leq k$, then we arbitrarily add vertices to $\tilde{S}$ until it contains exactly $k$ vertices. Since the weight on each edge is non negative and we keep adding more edges into the subgraph induced by $\tilde{S}$, $\xi$ is at least 1 in this case.

			Now suppose that $|S| > k$, in this case we need to eliminate vertices from $\tilde{S}$ until it only contains $k$ vertices. Assume that we want to eliminate vertex $i$ from $\tilde{S}$, then the total weights induced by $\tilde{S} \setminus\{ i\} $ are:
			\begin{equation}
			w(\tilde{S}) - (\sum_{j \in \tilde{S}, i \neq j} w_{ij} + w_{ii}).
			\end{equation}

			If each vertex is removed once, then:
			\begin{equation}
			\begin{aligned}
			\sum_{i \in \tilde{S}} w(\tilde{S}\setminus \{i\}) & = \sum_{i \in \tilde{S}} w(\tilde{S}) - (\sum_{j \in \tilde{S}, i \neq j} w_{ij} + w_{ii}) \\
			& = (|\tilde{S}|-2)w(\tilde{S})+\sum_{i \in \tilde{S}}w_{ii}.
			\end{aligned}
			\end{equation}

			The last equality is because each non-self edge is counted twice during the removal (once at the count of each vertex), but self-edge is only counted once.

			Similarly, suppose that $v$ is the node that is removed during the swapping procedure, then according to the swapping procedure in Algorithm \ref{algo}:
			\begin{equation}
			\sum_{j \in \tilde{S}} w_{vj} \leq \sum_{j \in \tilde{S}}w_{ij}, \forall i \neq v \in \tilde{S}.
			\end{equation}

			Then:
			\begin{equation}
			\begin{aligned}
			w(\tilde{S}\setminus \{v\}) & \geq \frac{1}{|\tilde{S}|} \sum_{i \in \tilde{S}}w(\tilde{S}\setminus \{i\}) \\
			& = \frac{1}{|\tilde{S}|}((|\tilde{S}|-2)w(\tilde{S})+\sum_{i \in \tilde{S}}w_{ii}) \\
			& = \frac{|\tilde{S}|-2}{|\tilde{S}|}w(\tilde{S}) + \frac{\sum_{i \in \tilde{S}}w_{ii} }{|\tilde{S}|} \\
			& \geq \frac{|\tilde{S}|-2}{|\tilde{S}|}w(\tilde{S}).
			\end{aligned}
			\end{equation}

			The last inequality follows because $w_{ii}$'s are non-negative, which is proved in Lemma \ref{wii}.

			Finally by induction, we obtain the eventual $\tilde{S}$ containing $k$ vertices satisfying:
			\begin{equation}
			w(\tilde{S}) \geq \frac{k(k-1)}{|S|(|S|-1)}w(S),
			\end{equation}
			which concludes the final proof.
		\end{proof}

		\subsection{Whitening of covariate data}
		Suppose that the covariates generated from the simulation tool in \cite{sim} has a covariance matrix $\Sigma$. We can decompose $\Sigma = \bm{V}\bm{D}\bm{V}^{\top}$， where $\bm{D}$ is a diagonal matrix with diagonal elements as the eigenvalues of $\Sigma$, and $\bm{V}$ is an orthogonal matrix containing the corresponding eigenvectors. Suppose that each observation of the covariates is denoted by $\bm{z}$, then the whitened observation $\bar{\bm{z}} = \bm{D}^{-\frac{1}{2}}\bm{V}^{\top}\bm{z} \in \mathbb{R}^{p-1}$, where $p-1$ is the dimension of the covariates. For the original covariate matrix $\bm{Z}$, it is whitened as $\bar{\bm{Z}} = \bm{Z}\bm{V}\bm{D}^{-\frac{1}{2}}$. The whitened covariates are then fed into the optimization problem to find the optimal assignment strategy.


\end{document}